\documentclass[noinfoline,12pt]{imsart}\setcounter{tocdepth}{2}

\RequirePackage[OT1]{fontenc}
\RequirePackage{amsthm,amsmath}
\RequirePackage[numbers]{natbib}
\RequirePackage[colorlinks,citecolor=blue,urlcolor=blue]{hyperref}

\usepackage{verbatim}

\allowdisplaybreaks

\usepackage{fullpage}

\newcommand {\Prob}{\mathbb{P}}

\newcommand{\E}{\mathbb{E}}

\newcommand {\B}{\mathscr{B}}

\newcommand {\cov}{\textrm{Cov}}

\usepackage{subfigure}
\usepackage{layout}

\usepackage{dsfont}
\usepackage[mathscr]{eucal}
\usepackage[toc,page]{appendix}
\usepackage{mathrsfs}
\usepackage{color}
\usepackage{pifont}
\usepackage{bm}
\usepackage{latexsym}
\usepackage{amsfonts}
\usepackage{amssymb}
\usepackage{epsfig}
\usepackage{graphicx}
\usepackage{multirow}

\newtheorem{theorem}{Theorem}
\newtheorem{proposition}{Proposition}

\newtheorem{lemma}{Lemma}
\newtheorem{definition}{Definition}

\newtheorem{assumption}{Assumption}

\newcommand{\transpose}{^{\top}}
\startlocaldefs
\numberwithin{equation}{section}
\theoremstyle{plain}
\endlocaldefs

\begin{document}

\begin{frontmatter}

\title{{\large Recovering Covariance from Functional Fragments}$^*$}

\runtitle{Recovering Covariance from Functional Fragments}

\begin{aug}
\author{\fnms{Marie-H\'el\`ene} \snm{Descary}$^1$\ead[label=e1]{marie-helene.descary@epfl.ch}} \and
\author{\fnms{Victor M.} \snm{Panaretos}$^2$\ead[label=e2]{victor.panaretos@epfl.ch}}

\thankstext{t1}{Research supported by a Swiss National Science Foundation grant.}

\runauthor{M.-H. Descary \& V.M. Panaretos}

\affiliation{University of Geneva and Ecole Polytechnique F\'ed\'erale de Lausanne}

\address{$^1$ Universit\'e du Qu\'ebec \`a Montr\'eal, $^2$ Ecole Polytechnique F\'ed\'erale de Lausanne}


\end{aug}

\begin{abstract} We consider nonparametric estimation of a covariance function on the unit square, given a sample of discretely observed fragments of functional data. When each sample path is only observed on a subinterval of length $\delta<1$, one has no statistical information on the unknown covariance outside a $\delta$-band around the diagonal. The problem seems unidentifiable without parametric assumptions, but we show that nonparametric estimation is feasible under suitable smoothness and rank conditions on the unknown covariance. This remains true even when observation is discrete, and we give precise deterministic conditions on how fine the observation grid needs to be relative to the rank and fragment length for identifiability to hold true. We show that our conditions translate the estimation problem to a low-rank matrix completion problem, construct a nonparametric estimator in this vein, and study its asymptotic properties. We illustrate the numerical performance of our method on real and simulated data. 
\end{abstract}

\begin{keyword}[class=AMS]
\kwd[Primary ]{62M, 15A99}
\kwd[; secondary ]{62M15, 60G17}
\end{keyword}

\begin{keyword}
Analytic continuation; Censoring; Covariance Function; Functional Data Analysis; Matrix Completion; Partial observation.
\end{keyword}

\end{frontmatter}

\tableofcontents

\newpage

\section{Introduction}\label{sec_intro}

Functional data analysis \citep{hsing-book,ramsay-silver} comprises a broad class of problems and techniques for inferring aspects of the law of a random function $X(t):[0,1]\rightarrow\mathbb{R}$ given multiple realisations thereof. These problems cover the full gamut of statistical tasks, including regression, classification, and testing \citep{wang-review}. A key role is played by the covariance operator of the random function $X(t)$. This is an integral operator with kernel $r(s,t)=\mbox{cov}\{X(s),X(t)\}$, encoding the second-order fluctuations of the random function $X(t)$ around its mean function. This operator, and its associated spectral decomposition, are at the core of many methods obtained via dimension reduction, but also appear in the regularisation of inference problems, which are nearly always ill-posed in the functional case \citep{panaretos-tavakoli-cramer}. 

Consequently, the estimation of the covariance operator associated with $X$ is typically an important first step in any functional data analysis. This is to be done on the basis of independent and identically distributed realisations $X_1,\ldots,X_n$ of the random process $X$. If these are fully observable as continuous curves, then the estimation problem has a simple solution via the empirical covariance $n^{-1}\sum_{i=1}^n\{X_i(s)-\mu_n(s)\}\{X_i(t)-\mu_n(t)\}$, with $\mu_n(s)=n^{-1}\sum_{i=1}^nX_i(s)$. This enjoys several appealing properties, courtesy of the law of large numbers and central limit theorem in Banach space \citep{dauxois}. In practice, however, $X_1,\ldots,X_n$ are not observable as complete trajectories $\{X_i(t):t\in[0,1]\}$. Instead, one only has some finite-dimensional measurements on each function. Typically one observes point evaluations on a grid, and the nature and degree of difficulty of the estimation problem depends upon the structure of the grids $\{t_{ij}\}$, broadly classified as dense or sparse \citep{zhang2016sparse}. 
 
Still, there are cases where even less information is available. In particular, it can happen that each curve is censored, and can only be observed on random subsets $O_i$ of $[0,1]$. These censored curves are referred to as functional fragments. Here too, one can consider regimes coarsely paralleling the dense/sparse observation setting, where one has qualitatively different information on $r(s,t)$ over different subregions of $[0,1]^2$. The first regime, which we call the blanket regime, is such that the typical $O_i$ is a union of sub-intervals with a non-negligeable probability of covering all of $[0,1]$. In this case, information on $r(s,t)$ is available for all $(s,t)\in [0,1]^2$, perhaps of variable amount over different pairs $(s,t)\in[0,1]^2$, depending on the number of $i$ for which $(s,t)\in O_i\times O_i$; in any case the effective sample size available for estimating the covariance is not materially different from $n$ on most of $[0,1]^2$, as illustrated in Fig.~\ref{blanketVSbanded}. The second regime, which we call the banded regime, is such that each $O_i$ is a single interval of length at most $\delta$, for some $\delta>0$ distinctly smaller than 1. Here we have no information on the covariance $r(s,t)$ outside the band $\B_\delta=\{(s,t)\in[0,1]^2:|s-t|\leq\delta\}$. Moreover, the information that we do have on $r(s,t)$ will be reliable only on a strictly narrower band of width $\delta'<\delta$ and centred near $(1/2,1/2)$ since, as illustrated in Fig.~\ref{blanketVSbanded}, the effective sample size is at least halved elsewhere. One thus has to consider the effective $\delta'$ instead of the exact $\delta$, and we will call this the effective bandwidth.

\begin{figure}
\centering
\includegraphics[scale=0.7]{./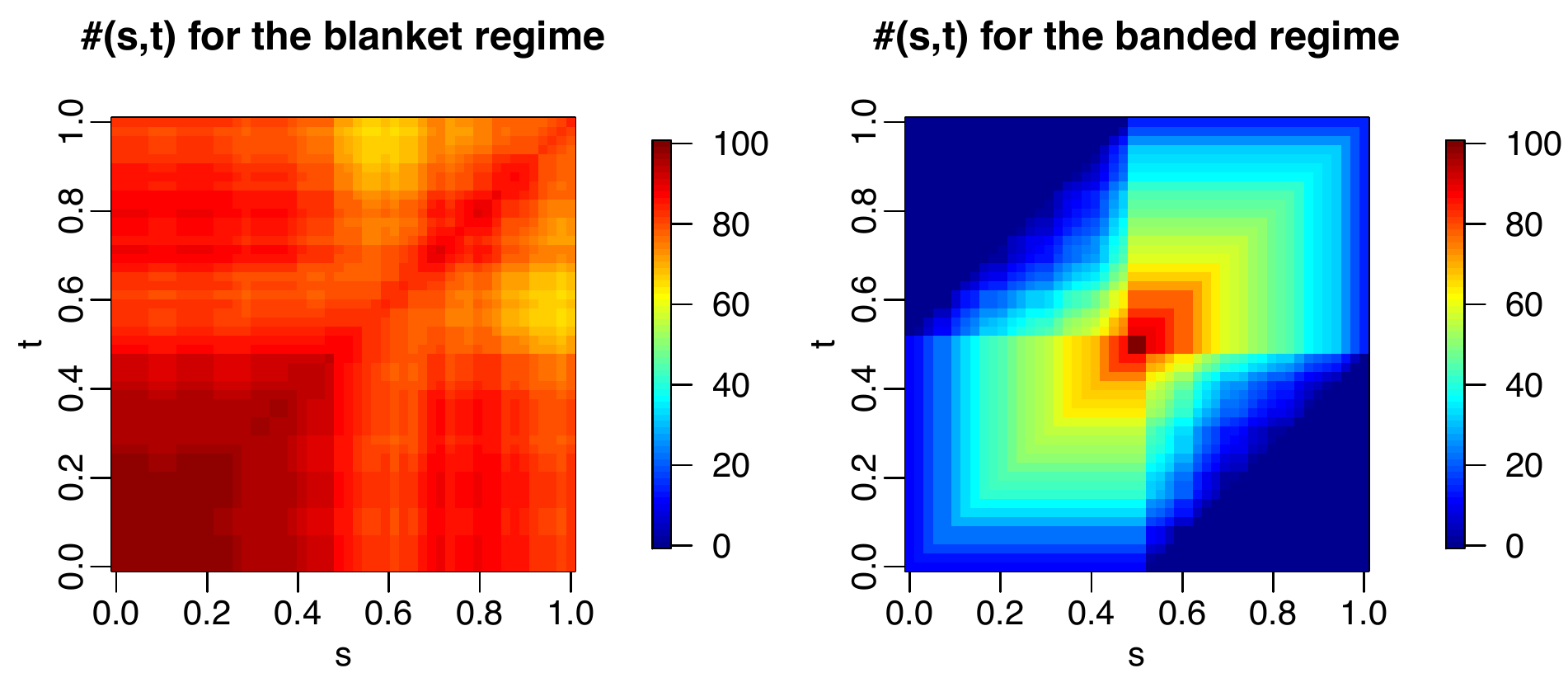}  
\caption{Illustration of the blanket (left) and banded (right) regimes. The entry $(s,t)$ in each plot represents the number of observations out of $n=100$ available to estimate $r(s,t)$, i.e. the value $|i : (s,t)\in O_i\times O_i|$. The $O_i$ were simulated as in \citet[Section 5]{kraus2015components} for the blanket regime and with length of $\delta=0.5$ for the banded regime.}
\label{blanketVSbanded}
\end{figure}

The blanket regime was considered by \cite{kraus2015components}, who provided a nonparametric covariance estimator,  assuming complete observation of the functional data. The band-limited regime is considerably more challenging, and to the best of our knowledge has only been investigated by \citet{delaigle2013classification,delaigle2016approximating}, who discuss the more challenging situation where observation is discrete. In their first paper, they show that despite the low degree of specification, a simple strategy involving gluing neighbouring fragments between curves can offer a way forward, at least when the goal is classification. In their second paper, they model the discrete curve observations as a Markov process. This allows them to complete the discretised curves and discretised covariance by a conditional averaging procedure. The method is appealing and shows good performance in practice, but the model is only posited at the discrete level, and would correspond to a diffusion in the continuum. 
  
Here we investigate whether it is possible to construct a nonparametric estimator of the covariance $r(s,t)$, on the basis of discretely observed fragments in the banded regime. We first study the problem of identifiability, i.e., when can discrete information inside a $\delta$-band uniquely determine the values of the covariance on grid points that are in the censored region $\{(s,t)\in[0,1]^2:|s-t|>\delta\}$. We show that, assuming analyticity of $r(s,t)$, the missing components are uniquely determined, provided the grid size $K$ exceeds a critical threshold that depends on $\delta$ and the rank of $r(s,t)$. Furthermore, we characterise this unique extension as a low-rank matrix completion of the observable banded covariance. This allows us to construct a nonparametric estimator of the covariance function $r(s,t)$ on the entire rectangle $[0,1]^2$, and study its asymptotic properties as dependent on sample size $n$ and grid size $K$. The numerical performance of our approach is investigated in simulated and real functional data. Our method exploits a novel matrix completion framework for functional data analysis recently introduced in \citet{descary2016functional}. While there are strong parallels in our development, the fragmentation setting analysed here is complementary and in some ways more challenging: it deals with nonparametric extrapolation rather than interpolation, based on rather limited data. We will concentrate on the estimation of $r(s,t)$, rather than the prediction of the censored components of the continuous curves $X_1,...,X_n$ from the discrete fragments. Once an estimate of the complete covariance is available, the missing regions can be predicted from the fragments, using best linear prediction, following the techniques of \citet{liebl2013modeling}, \citet{goldberg2014predicting}, and \citet{kraus2015components}.

\section{Problem Statement and Notation}\label{problem_statement}

Consider a continuous random function $X:[0,1]\rightarrow\mathbb{R}$, seen as a random element of the Hilbert space $L^2[0,1]$ comprised of real square-integrable functions, with inner product and norm
$$\langle f,g\rangle_{L^2}=\int_0^1f(t)g(t)dt,\quad \|f\|_{L^2}=\langle f,f\rangle_{L^2}^{1/2}.$$
Assuming that $E(\|X\|_{L^2}^2)<\infty$,  we may define the mean function $E\{X(t)\}=\mu(t)$ and covariance kernel $r(s,t)=\mbox{cov}\{X(s),X(t)\}$. Given a sample $X_1,...,X_n$ of $n$ independent copies of $X$, their natural estimators are the empirical counterparts
$$\mu_n(t)=\frac{1}{n}\sum_{i=1}^{n}X_i(t),\quad  {r}_n(s,t)=\frac{1}{n}\sum_{i=1}^{n}\{X_i(s)-\mu_n(s)\}\{X_i(t)-\mu_n(t)\}.$$
Suppose that each curve $X_i$ is only observed on a random subinterval $O_i \subset [0,1]$ of length $\delta\in(0,1)$. The $\{O_i\}$ are independent and identically distributed, and independent of the $\{X_i\}$. In this case, we can no longer construct the empirical covariance. We can still construct the patched estimator $\tilde{r}_n$ of \citet{kraus2015components}, which keeps track of the amount of information available for each $(s,t)$:
\begin{equation} \label{cov_Kraus}
\tilde{r}_n (s,t) = \frac{I(s,t)}{\sum_{i=1}^nU_i(s,t)}\sum_{i=1}^nU_i(s,t)[\{X_i(s)-\tilde \mu_{n,st}(s)\} \{X_i(t)-\tilde \mu_{n,st}(t) \}],
\end{equation}
where $U_i(s,t)=1(s\in O_i)1(t\in O_i)$ ,  $I(s,t)=1\left\{\sum_{i=1}^nU_i(s,t)>0\right\}$, and
$$ \tilde \mu_{n,st}(t) = \frac{I(s,t)}{\sum_{i=1}^nU_i(s,t)}\sum_{i=1}^nU_i(s,t)X_i(t).$$
\cite{kraus2015components} introduced this estimator in the blanket regime, but in our banded observation regime, (\ref{cov_Kraus}) will no longer be viable, since the length constraint on the $O_i$ implies that we have no data outside the band $\B_{\delta}$. Consequently, $\tilde r_n(s,t)\equiv 0$ on $[0,1]^2\setminus \B_\delta$. An illustration of $r_n$ and $\tilde r_n$ is provided in Fig.~\ref{def_cov}. As pointed out in $\S$\,\ref{sec_intro}, the estimator $\tilde r_n$ is reliable only on a restricted band $\B_{\delta'}$ for $\delta'<\delta$. 

To complicate matters further, the curves will only be measured on a finite grid of points, and this needs to be taken into account, defining discrete $K$-resolution versions of the quantities already introduced. Let $\{t_j\}_{j=1}^{K}$ be a perturbation, potentially random, of a regular grid of $K$ points defined as
\begin{equation*}
(t_1,\ldots,t_K)\in\mathcal{T}_K=\left\{ (x_1,\ldots,x_K)\in \mathbb{R}^K: x_1\in I_{1,K},\dots,x_K\in I_{K,K}\right\},
\end{equation*}
with $\{I_{j,K}\}_{j=1}^{K}$ being the regular partition of $[0,1]$ into intervals of length $1/K$. From $K$ evaluations of a typical curve $X_i$ on this grid, we can define a $K$-resolution representation of $X$, 
$$X^K_i(t)=\sum_{j=1}^K X_i(t_j) 1(t\in  I_{j,K}).$$ 
The covariance of $X^K_i$, which is in fact the $K$-resolution version $r^K$ of $r$, is 
$$ r^K(s,t) = \mbox{cov}\{X^K_i(s),X^K_i(t)\}=\sum_{j,l=1}^K r(t_j,t_l)1\{(s,t)\in I_{j,K}\times I_{l,K}\},$$
and can be summarised by the $K\times K$ matrix coefficient matrix $R^K = \{r(t_j,t_l)\}_{j,l=1}^{K}$. The empirical $K$-resolution covariance kernel $r_n^K(s,t)$ obtained from $K$ discrete measurements on $n$ independent copies of $X $, $\left\{X_{ij}=X_i(t_j): i=1,\ldots,n, j=1,\ldots,K\right\}$, will similarly be defined as the empirical covariance of the $K$-resolution curves $X_1^K,\ldots,X_n^K$,
$$ r_n^K(s,t) = \sum_{j,l=1}^K r_n(t_j,t_l)1\{(s,t)\in I_{j,K}\times I_{l,K}\},$$
summarised by the matrix $R^K_n = \{r_n(t_j,t_l)\}_{j,l=1}^{K}$. This object is inaccessible in the fragmented case. Instead, combining discrete observation and fragmentation, we can only form the discrete analogue of the patched estimator of \cite{kraus2015components}, defined as
\begin{equation}\label{patched_covariance}
 \tilde r_n^K(s,t) = \sum_{j,l=1}^K \tilde{r}_n (t_j,t_l)1\{(s,t)\in I_{j,K}\times I_{l,K}\}, 
 \end{equation}
with matrix representation given by $\tilde R _n^K= \{\tilde r_n(t_j,t_l)\}_{j,l=1}^{K}$. This is precisely the object obtained when replacing $X_i$ by $X_i^K$ in \eqref{cov_Kraus},  and is illustrated in Fig.~\ref{def_cov}.  Since the length of $O_i$ is $\delta$, the set $\{t_j\}_{j=1}^K \cap O_i$ of points on which the curve $X_i$ is observed contains between $\lfloor K\delta \rfloor -1$ and $\lceil K\delta \rceil +1$ points. This implies that the matrix $\tilde R_n^K$ is guaranteed to have non-zero entries only on the band $B_{\delta}=\{ (j,l) : |j-l| < \lfloor K\delta \rfloor -1\}$.
Consequently, it can only be used as an estimator of the banded version of $R^K$, say $P^K_\delta\circ R^K$, where the matrix $P^K_\delta \in \mathbb{R}^{K\times K}$ is defined as $P^K_\delta(j,l) = 1(|j-l| < \lfloor K\delta \rfloor - 1)$ and $\circ$ denotes the element-wise product. Hence, we need to investigate what nonparametric conditions on $r(s,t)$ will suffice for the problem to be identifiable on the basis of the fragmented discrete measurements: when can we uniquely extrapolate $P^K_\delta\circ R^K$ to recover $R^K$?

The assumption that all the curves are observed on the same grid and without any measurement errors is not essential, but allows for a more transparent analysis and presentation below. Extensions to non-common grids, irregular grids, and even measurement errors are treated in Section \ref{sec:irr_grid_meas_error}.

\begin{figure}
\centering
\includegraphics[scale=0.55]{./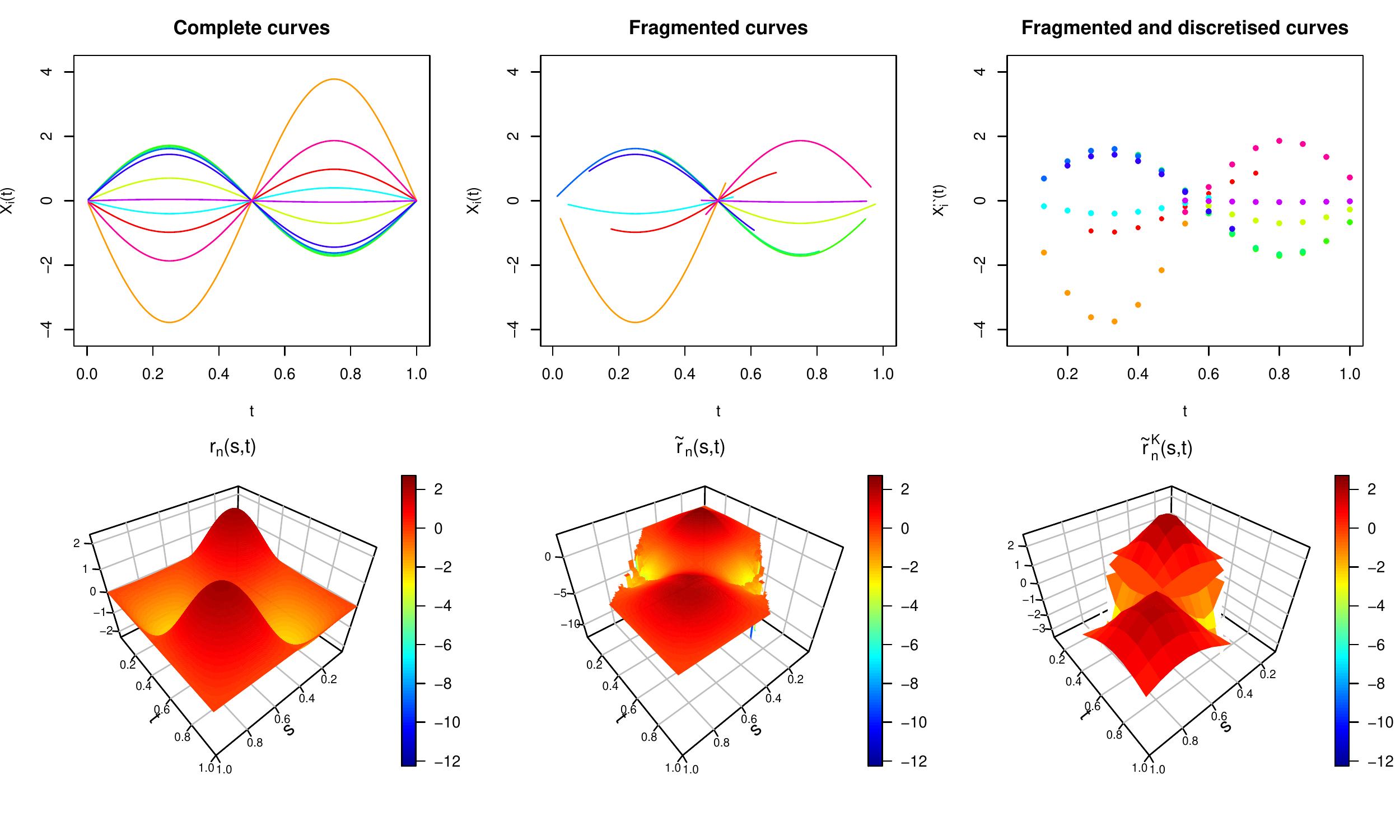} 
\caption{{First row: ten typical curves from a sample of size $n=100$, first under complete continuous observation (left), then under fragmented continuous observation ($\delta=0.5$), and finally under fragmented discrete observation ($\delta=0.5$, $K=15$) on the right. Second row: the associated covariance functions $r_n$, $\tilde{r}_n$, and $\tilde{r}_n^K$.}} \label{def_cov}
\end{figure}

\section{Identifiability} \label{uni_comp}

Despite the weak specification, we wish to impose genuinely nonparametric conditions to ensure identifiability. Still, these will need to be stricter than usual. For even if we were able to perfectly estimate $r(s,t)$ on $\B_{\delta}$ in the continuum, there is a priori no guarantee that the restriction of $r$ on this band extends uniquely to the entire domain of definition. In fact, the problem of extending positive definite functions is notorious in analysis and probability \citep{jorgensen2016extensions}, and connected to the unique extension of characteristic functions via Bochner's theorem. Using this avenue, we can build on classical counterexamples to unique extension of characteristic functions \citep{gnedenko1937fonctions,esseen1945fourier} to show that:

\begin{proposition}\label{prop:counterexample1}
For any $0<\varepsilon<1/2$, there exist $C^{\infty}$ covariance functions $\kappa_1$ and $\kappa_2$ on $(-\pi,\pi)^2$ with common trigonometric eigenfunctions, eigenvalues decaying faster than any polynomial rate, and such that $\kappa_1(x,y)=\kappa_2(x,y)$ if and only if $|x-y|\leq 1-2\varepsilon$. 
\end{proposition}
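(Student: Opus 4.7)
The plan is to realise $\kappa_1,\kappa_2$ as stationary kernels on the circle $\mathbb{R}/2\pi\mathbb{Z}$, restricted to $(-\pi,\pi)^2$. Writing $\kappa_j(x,y)=k_j(x-y)$ with $k_j$ real, even, and $2\pi$-periodic, Bochner's theorem on the circle says that $k_j$ is positive definite iff its Fourier coefficients $c_n^{(j)}=\hat k_j(n)$ are all nonnegative. A change-of-variables argument exploiting the $2\pi$-periodicity of $k_j$ then gives, for any $n\in\mathbb{Z}$,
\begin{equation*}
\int_{-\pi}^{\pi} k_j(x-y)\,e^{iny}\,dy = e^{inx}\int_{x-\pi}^{x+\pi} k_j(u)\,e^{-inu}\,du = 2\pi\,c_n^{(j)}\,e^{inx},
\end{equation*}
so $\{e^{inx}\}_{n\in\mathbb{Z}}$ is a common eigenbasis of the integral operators on $L^2(-\pi,\pi)$ induced by $\kappa_1$ and $\kappa_2$, with eigenvalues $2\pi c_n^{(1)}$ and $2\pi c_n^{(2)}$. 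This handles the common trigonometric eigenfunctions requirement, and reduces the problem to choosing two nonnegative Fourier-coefficient sequences whose corresponding kernels coincide exactly on $|u|\leq 1-2\varepsilon$.

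The first concrete step is to build a single real, even, $2\pi$-periodic $C^\infty$ function $g$ on $\mathbb{R}$ that vanishes exactly on $[-(1-2\varepsilon),1-2\varepsilon]$ and is strictly positive on $(1-2\varepsilon,\pi)\cup(-\pi,-(1-2\varepsilon))$; this will play the role of $k_1-k_2$. Using the standard non-analytic bump $\psi(t)=e^{-1/t}\mathbf{1}_{\{t>0\}}$, I would set, for $u\in[-\pi,\pi]$,
\begin{equation*}
g(u) = \psi\!\bigl(u^2-(1-2\varepsilon)^2\bigr)\,\psi(\pi^2-u^2),
\end{equation*}
which vanishes to infinite order at $\pm(1-2\varepsilon)$ and at $\pm\pi$, so that its $2\pi$-periodic extension is $C^\infty$. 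The Schwartz property of the Fourier transform on the circle then ensures that $\hat g(n)$ is real and even in $n$ and decays faster than any polynomial rate.

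The second step is to define $c_n^{(2)} = |\hat g(n)|$ and $c_n^{(1)} = |\hat g(n)|+\hat g(n) = 2\bigl(\hat g(n)\bigr)_+$; both sequences are nonnegative, symmetric in $n$, summable, and inherit the faster-than-polynomial decay from $\hat g$. Setting $k_j(u)=\sum_{n\in\mathbb{Z}} c_n^{(j)} e^{inu}$ gives two $C^\infty$ positive-definite $2\pi$-periodic functions with $k_1-k_2 = \sum_n \hat g(n)e^{inu}=g$, so $\kappa_j(x,y)=k_j(x-y)$ defines the desired pair of $C^\infty$ covariances whose difference is $g(x-y)$, vanishing iff $|x-y|\leq 1-2\varepsilon$. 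Note that the absence of exponential decay in $c_n^{(j)}$ is consistent with (indeed forced by) the fact that $k_1-k_2$ vanishes on an interval without being identically zero: real-analyticity would be incompatible with the ``only if'' direction.

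The main obstacle is engineering the ``if and only if'' clause rather than the softer ``agree on a band'' statement: it forces $g$ to be nowhere zero outside the prescribed band and, simultaneously, $C^\infty$ and periodic. The explicit bump construction above resolves this by making $g$ strictly positive on its support and flat of infinite order at the boundaries $\pm(1-2\varepsilon),\pm\pi$. A secondary subtlety is ensuring nonnegativity of both $c_n^{(j)}$ while preserving $c_n^{(1)}-c_n^{(2)}=\hat g(n)$; this is handled by padding with $|\hat g(n)|$, which automatically matches the decay rate of $\hat g(n)$ and so does not degrade the spectral-decay claim.
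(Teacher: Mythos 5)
Your construction has a genuine gap coming from the periodisation. You realise $\kappa_1-\kappa_2$ as $g(x-y)$ with $g$ a $2\pi$-periodic function vanishing exactly on $\{|u|\le 1-2\varepsilon\}$ \emph{within one period} $[-\pi,\pi]$. But for $x,y\in(-\pi,\pi)$ the difference $x-y$ ranges over $(-2\pi,2\pi)$, and periodicity forces $g(u)=g(u\mp 2\pi)$ there. Hence $g$ also vanishes on the translated set $\{2\pi-(1-2\varepsilon)\le |u|<2\pi\}$, which is a nonempty open set of differences attained near the corners $(\pi,-\pi)$ and $(-\pi,\pi)$ of the square (and, with your specific bump, $g(\pm\pi)=0$ as well, since the factor $\psi(\pi^2-u^2)$ vanishes there). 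So $\kappa_1=\kappa_2$ on two open corner triangles where $|x-y|>1-2\varepsilon$, and the ``only if'' direction fails. This is not a defect of your particular $g$: no $2\pi$-periodic function can vanish on $\{|u|\le 1-2\varepsilon\}$ and nowhere else in $(-2\pi,2\pi)$, because the band and its $\pm 2\pi$-translates are disjoint subsets of $(-2\pi,2\pi)$ whenever $1-2\varepsilon>0$. The obstruction is intrinsic to the circle-based route, and you cannot escape it by enlarging the period to $4\pi$ either, since your eigenfunction computation (the change of variables giving $\int_{x-\pi}^{x+\pi}=\int_{-\pi}^{\pi}$) relies precisely on $2\pi$-periodicity of the kernel.

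Everything else in your argument is sound: the Herglotz/Bochner reduction, the nonnegativity padding $c^{(2)}_n=|\hat g(n)|$, $c^{(1)}_n=|\hat g(n)|+\hat g(n)$, the rapid (but necessarily non-geometric) decay, and the smoothness. For contrast, the paper avoids the wrap-around entirely by starting from Esseen's non-periodic pair on the line, $\psi_1(u)=e^{-|u|}$ and its tangent-line modification $\psi_2$, which are both positive definite, agree exactly on $|u|\le 1$, and satisfy $\psi_1-\psi_2>0$ for \emph{all} $|u|>1$; mollifying the corresponding Gaussian processes with an $\varepsilon$-bump then yields $C^\infty$ covariances agreeing exactly on $|x-y|\le 1-2\varepsilon$ with a strictly signed difference everywhere outside, at the cost of a somewhat looser argument for the trigonometric-eigenfunction claim. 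If you want to salvage your approach, you would need either to weaken the conclusion to ``$\kappa_1=\kappa_2$ on $|x-y|\le 1-2\varepsilon$ and $\kappa_1\ne\kappa_2$ on a nonempty open set,'' or to abandon exact periodicity and supply a separate argument for the common eigenfunctions.
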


The proposition illustrates that smoothness alone cannot guarantee unique extension, even if the eigenfunctions are assumed known. A condition that will guarantee unique extension in the continuum is real analyticity \citep{Krantz}. This requires that $r(s,t)$ admit a Fourier series expansion with coefficients that decay not merely faster than any polynomial, but at a rate that is at least geometric. Analyticity will guarantee unique extension from any open band $\B_{\delta}$, by analytic continuation. In light of Proposition \ref{prop:counterexample1}, analyticity is a sharp assumption, despite its strength, if one seeks identifiability with genuinely nonparametric assumptions. Still, in order for this unique extension result to carry over to the discrete case, we will need to impose another condition to relate analyticity of $r(s,t)$ to the matrix properties of $R^K=\{r(t_j,t_l)\}_{j,l=1}^{K}$.  This will be to require that $r(s,t)$ be of finite rank. In summary, we assume:

\begin{assumption}\label{analyticity}
The kernel $r(s,t):[0,1]^2\rightarrow\mathbb{R}$ admits a Mercer decomposition $r(s,t)=\sum_{j=1}^{q}\lambda_j \phi_j(s)\phi_j(t)$ where
the rank $q$ is finite and the orthogonal eigenfunctions $\{\phi_1,\ldots,\phi_q\}$ are real analytic on $(0,1)$.
\end{assumption}

Except for the finiteness requirement, there is no limit on how large $q$ can be, and indeed $q$ is not assumed to be known, nor are the $\varphi_i$ or $\lambda_i$. Since finite-rank analytic covariances are dense among continuous covariances, 
the model can approximate a rich nonparametric class of covariances and is otherwise not restricted. The trade-off for a large value of $q$ manifests itself when we relate $q$ to the identifiability of the $K$-resolution version of the covariance, $r^K$, as summarised in the matrix $R^K$. One expects that the higher the value of $q$, the higher a resolution $K$ would be required. Fortunately, this relationship can be quantified in very precise terms, is linear, and indeed yields rigorous guarantees on identifiability. The proposition below demonstrates that as long as $K>\delta^{-1}(2q+1)$, the matrix $R^K$ is uniquely determined from its entries on the band $B_{\delta}$, by minimal rank completion:

\begin{proposition}\label{unique_completion}
In the notation and framework of Section \ref{problem_statement}, let Assumption \ref{analyticity} hold. 
If $\delta\in(0,1)$ and $K>\delta^{-1}(2q+1)$ then, for almost all grids in $\mathcal{T}_K$, the matrix $R^K = \{r(t_j,t_l)\}_{j,l=1}^{K}$ is the unique solution to the matrix completion problem
\begin{equation} \label{theoretical_min_problem}
\min_{\theta \in \mathbb{R}^{K\times K}}\mathrm{rank}\{\theta\}   \qquad \textrm{subject to} \quad \|P^{K}_\delta\circ (R^K-\theta)\|^2_F=0,
\end{equation}
where $\|\cdot\|_F$ is the Frobenius matrix norm. Equivalently, for almost all grids in $\mathcal{T}_K$, and for all $\tau>0$ sufficiently small,
\begin{equation}\label{theoretical_lagrange}
R^K=\{r(t_j,t_l)\}_{j,l=1}^{K}=\underset{\theta\in \mathbb{R}^{K\times K}}{\arg\min}\left\{K^{-2}\left\| P^K_\delta\circ ( R^K-\theta)\right\|_F^2 +\tau \,\mathrm{rank}(\theta)\right\}.
\end{equation}
\end{proposition}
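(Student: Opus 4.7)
My plan is to reduce the matrix completion statement to a deterministic minor-expansion argument, supported by a genericity analysis of the grid. As a preliminary step I would use Assumption \ref{analyticity} to write $R^K = \Phi\Lambda\Phi^\top$ with $\Phi_{jk} = \phi_k(t_j)$ and $\Lambda = \mathrm{diag}(\lambda_1, \dots, \lambda_q)$. Linear independence of the real-analytic eigenfunctions $\phi_k$ makes every generalized Vandermonde determinant $\det\{\phi_k(t_{j_i})\}_{i,k=1}^q$ a non-identically-zero real-analytic function of $(t_{j_1}, \dots, t_{j_q})$, so its zero set has Lebesgue measure zero. Taking the union over the finitely many relevant index sets, I would conclude that for almost every grid in $\mathcal{T}_K$, every $q\times q$ submatrix of $R^K$ with distinct row and column indices is nonsingular; in particular $\mathrm{rank}(R^K)=q$.

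The core of the argument is then to show that any $\theta\in\mathbb{R}^{K\times K}$ with $\mathrm{rank}(\theta)\le q$ satisfying $P^K_\delta\circ(R^K-\theta)=0$ must equal $R^K$. Under the hypothesis $K>\delta^{-1}(2q+1)$ the band half-width $b=\lfloor K\delta\rfloor -1$ satisfies $b\ge 2q$, which lets me induct on the off-band distance $d=|i-j|\ge b$. For the inductive step with $(i,j)$ such that $j-i=d$, I would examine the $(q+1)\times(q+1)$ submatrix of $\theta$ on rows $i,i+1,\dots,i+q$ and columns $j-q,j-q+1,\dots,j$. A direct computation shows that the entry at position $(i+a,\,j-q+c)$ inside this block satisfies $|\mathrm{row}-\mathrm{col}|=d+c-a-q$, which attains the value $d$ only at $(a,c)=(0,q)$ (that is, at $\theta_{ij}$ itself) and is strictly less than $d$ otherwise. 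So every other entry in the block is either inside the band or at distance in $[b,d-1]$, and therefore coincides with the corresponding entry of $R^K$ by the inductive hypothesis. Since $\mathrm{rank}(\theta)\le q$ forces this submatrix to be singular, expansion of its determinant along the first row yields a single linear equation for $\theta_{ij}$ whose coefficient is (up to sign) the $q\times q$ minor of $R^K$ on rows $i+1,\dots,i+q$ and columns $j-q,\dots,j-1$, nonzero by the preliminary genericity step. Because $R^K$ also satisfies the same equation, $\theta_{ij}=R^K_{ij}$; the symmetric case $j<i$ is handled analogously, and boundary indices fit because $b\ge q$.

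The main obstacle is bookkeeping for the inductive step -- checking that at every iteration exactly one entry of the chosen submatrix is simultaneously off-band and not yet determined, and that the union over $(i,j)$ of the exceptional grid sets remains of Lebesgue measure zero in $\mathcal{T}_K$. The first issue is resolved by the index count above; the second is automatic because only finitely many minors are invoked. Combining these, $R^K$ is the unique $\theta$ with $\mathrm{rank}(\theta)\le q$ matching the band, and no feasible $\theta$ with rank strictly less than $q$ can exist either (otherwise the same induction would give $\theta=R^K$, contradicting $\mathrm{rank}(R^K)=q$). This establishes \eqref{theoretical_min_problem}.

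Finally, to derive the Lagrangian form \eqref{theoretical_lagrange}, I would compare the value $\tau q$ attained at $\theta=R^K$ with the objective elsewhere. Any feasible $\theta\ne R^K$ must have rank $\ge q+1$ by the uniqueness just proved, so its objective is at least $\tau(q+1)>\tau q$. For infeasible $\theta$, the quantity $\rho_r=\min\{K^{-2}\|P^K_\delta\circ(R^K-\theta)\|_F^2 : \mathrm{rank}(\theta)\le r\}$ is strictly positive for every $r<q$, because a rank-$r$ matrix attaining $\rho_r=0$ would equal $R^K$ by the constrained uniqueness, contradicting $\mathrm{rank}(R^K)=q$. Choosing $\tau<\min_{r<q}\rho_r/(q-r)$ then forces every infeasible competitor to have objective strictly above $\tau q$, so $R^K$ is the unique minimizer of \eqref{theoretical_lagrange}.
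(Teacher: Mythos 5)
Your proposal is correct and follows essentially the same route as the paper: genericity of the order-$q$ minors of $R^K$ over almost all grids, followed by an iterative completion in which each off-band entry is recovered as the unique root of a linear equation obtained by forcing a $(q+1)\times(q+1)$ minor with a single unknown to vanish, and finally a small-$\tau$ comparison to pass to the Lagrangian form. The only difference is that you prove directly (with explicit index bookkeeping) the two ingredients the paper imports by citation from \citet{descary2016functional}, namely the almost-everywhere nonvanishing of the minors and the equivalence of \eqref{theoretical_min_problem} and \eqref{theoretical_lagrange}.
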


Even if the rank is finite, analyticity cannot be replaced by smoothness in general. For example, one must rule out locally supported eigenfunctions, or one could construct distinct $C^{\infty}$ covariances of rank three that coincide on $\B_{1/3}$; see Appendix \ref{annex_supp}. 

\section{Estimation of the Covariance Function} \label{estimation}

Proposition \ref{unique_completion} offers a road map for nonparametric estimation of the complete covariance $r(s,t)$ on $[0,1]^2$, via the following three steps:
\begin{enumerate}
\item estimate the banded component $P^K_\delta\circ R^K$ by the empirically constructible matrix $\tilde R _n^K$;
\item solve the problem \eqref{theoretical_lagrange}, with the estimator $\tilde R _n^K$ replacing the estimand $P^K_\delta\circ R^K$;
\item use the optimum obtained, say $\hat R^K_n=\{\hat R^K_n(j,l)\}_{j,l=1}^K$, as the coefficient matrix of a step function $\hat{r}^K_n(s,t)=\sum_{j,l=1}^K \hat R^K_n(j,l)1\{(t,s)\in I_{j,K}\times I_{l,K}\}$, and call this our estimator.
\end{enumerate}
In summary:

\begin{definition}[Covariance Estimator] \label{def_cov_estim}
In the notation of Section \ref{problem_statement}, define an estimator $\hat R_n^K$ of $R^K$ as an approximate minimum of the constrained optimisation problem 

\begin{eqnarray*} 
\min_{ \theta \in \Theta_K} & &  K^{-2} \|  \tilde R^K_n- (P^K_\delta\circ\theta)\|_F^2 +\tau\emph{rank}(\theta) ,  \label{empirical_obj_function}
\end{eqnarray*}
where $\Theta_K$ is the set of $K\times K$ positive semi-definite matrices of trace norm bounded by that of $\tilde R^K_n$, $\tau>0$ is a sufficiently small tuning parameter, and $P^K_\delta(j,l) = 1\{|j-l| < \lfloor K\delta \rfloor - 1\}$. We then define the estimator $\hat{r}_n^K$ of $r$ as the step-function kernel with coefficient matrix $\hat R_n^K$,
\begin{equation*}
\hat{r}^K_n(x,y)=\sum_{j,l=1}^K\hat{R}^K_n(j,l)1\{(x,y)\in I_{j,K}\times I_{l,K}\}.
\end{equation*}
\end{definition}

By approximate minimum we mean that the value of the objective at $\hat R_n^K$ is within $O_{\mathbb{P}}(n^{-1})$ of the value of the overall minimum. Note that the feasible set can be taken to be $\Theta_K$, since no estimator outside that set would be sensible. Despite being a step function, the estimator $\hat{r}^K_n$ is not parametric. If one wishes to have a smooth estimate, it is possible to apply a final post-processing step, and smooth the estimator $\hat{r}^K_n$, with a bandwidth that decays as $K$ increases. Our next theorem considers the performance of the estimator in terms of sample size and resolution.

\begin{theorem}\label{thm:consistency_R}
In the notation and framework of Section \ref{problem_statement}, let Assumption \ref{analyticity} hold. Furthermore, let $E(\|X_i\|^4_{L^2})<\infty$ and suppose that $O_1,\ldots,O_n$ are independent and identically distributed subintervals of $[0,1]$ of length $\delta\in (0,1)$, independent of the $X_1,...,X_n$ and such that $\inf_{|s-t|<\delta}\Prob\{U_1(s,t) =1\}>0$. Define $K^*=\lceil \delta^{-1}(2q+1)\rceil$ to be the critical resolution. Then, for any $K>K^\star$ and for almost all grids in $\mathcal{T}_K$ 
\begin{eqnarray*}
\iint_{[0,1]^2}\left\{\hat{r}^K_n(x,y)-r(x,y)\right\}^2dxdy&\leq&O_{\mathbb{P}}(n^{-1})+{4}K^{-2}\underset{x,y\in[0,1]}{\sup}\|\nabla r(x,y)\|^2_2,
\end{eqnarray*}
for all $\tau>0$ sufficiently small. 
\end{theorem}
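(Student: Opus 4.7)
The plan is to decompose the $L^2$ error into a deterministic discretisation piece and a stochastic estimation piece by the elementary inequality
\begin{equation*}
\iint_{[0,1]^2}\bigl(\hat r_n^K - r\bigr)^2 \;\leq\; 2\iint_{[0,1]^2}\bigl(\hat r_n^K - r^K\bigr)^2 \;+\; 2\iint_{[0,1]^2}\bigl(r^K - r\bigr)^2,
\end{equation*}
and bound each term separately. For the second (discretisation) term, the integrand on the cell $I_{j,K}\times I_{l,K}$ equals $[r(x,y)-r(t_j,t_l)]^2$, and since $r$ is $C^1$ on $[0,1]^2$ by Assumption \ref{analyticity}, the mean value theorem together with the fact that the cell has diameter $\sqrt{2}/K$ yields $|r(x,y)-r(t_j,t_l)|^2\leq 2K^{-2}\sup\|\nabla r\|_2^2$. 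Integration over $[0,1]^2$ contributes $2K^{-2}\sup\|\nabla r\|_2^2$; the factor $2$ from the preceding inequality yields the stated $4K^{-2}$ coefficient.

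For the first (estimation) term, the identity $\iint(\hat r_n^K - r^K)^2 = K^{-2}\|\hat R_n^K - R^K\|_F^2$ reduces the task to bounding a matrix error. I would proceed in two substeps. First, consistency of the banded estimator: with $K$ fixed and the assumption $\inf_{|s-t|<\delta}\Prob\{U_1(s,t)=1\}>0$, each banded entry $\tilde r_n(t_j,t_l)$ is a sample covariance based on a sample of size proportional to $n$, so under $E(\|X\|_{L^2}^4)<\infty$ the usual $n^{-1/2}$-rate for sample covariances applies entry-wise, and summing the $O(K^2)$ non-zero entries gives $\|\tilde R_n^K - P^K_\delta\!\circ\! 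R^K\|_F^2 = O_{\Prob}(n^{-1})$. Second, a perturbation argument transfers this rate through the optimisation: since $R^K$ is asymptotically feasible and has rank $q$, plugging $R^K$ and $\hat R_n^K$ into the objective and using the approximate-minimiser tolerance $O_{\Prob}(n^{-1})$ forces, for small fixed $\tau$, $\mathrm{rank}(\hat R_n^K)\leq q$ with probability tending to one. Then triangle inequality gives $\|P^K_\delta\!\circ\! (\hat R_n^K-R^K)\|_F^2 = O_{\Prob}(n^{-1})$, and Proposition \ref{unique_completion}, applied in the form of a local stability bound for the masking map $\theta\mapsto P^K_\delta\!\circ\!\theta$ restricted to rank-at-most-$q$ PSD matrices in $\Theta_K$, yields $\|\hat R_n^K - R^K\|_F^2 = O_{\Prob}(n^{-1})$; dividing by $K^2$ completes the argument.

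The main obstacle is upgrading the uniqueness statement of Proposition \ref{unique_completion} into a quantitative local stability inequality of the form $\|\theta - R^K\|_F \leq C\|P^K_\delta\!\circ\!(\theta - R^K)\|_F$ for $\theta$ near $R^K$ within the rank-$\leq q$ PSD set. Because rank is a combinatorial, lower-semicontinuous functional, I cannot invoke a standard implicit-function or $M$-estimation argument directly; instead I would combine (i) the ``rank$\leq q$ eventually'' reduction indicated above, which converts the feasible set into a finite union of smooth varieties, (ii) a compactness argument on $\Theta_K$ extracting subsequential limits of $\hat R_n^K$, all of which must solve the noiseless problem \eqref{theoretical_lagrange} and hence equal $R^K$ by Proposition \ref{unique_completion}, and (iii) a continuity/Lojasiewicz-type argument on the remaining smooth manifold to obtain the quantitative inequality with a deterministic constant $C$ depending on $R^K$, $K$ and the grid. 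This two-stage reduction -- first combinatorial (rank), then analytic (local bi-Lipschitz) -- is the technical heart of the proof.
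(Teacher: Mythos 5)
Your decomposition into a discretisation term and an estimation term, and the Taylor bound giving the $4K^{-2}\sup\|\nabla r\|_2^2$ contribution, coincide with the paper's proof, and your first substep is essentially the paper's Lemma \ref{lemme_kraus} (the only care needed there is the random denominator $\sum_i U_i(s,t)$, which the hypothesis $\inf_{|s-t|<\delta}\Prob\{U_1(s,t)=1\}>0$ controls via an exponential tail bound). Your final assembly is in fact more direct than the paper's, which routes the rate through van der Vaart--Wellner's Theorem 3.4.1 with a modulus-of-continuity bound $\E\sup_{d(\theta,R^K)<\gamma}|D(\theta)|\lesssim \gamma n^{-1/2}$; in this fixed-$K$, well-separated setting your basic-inequality-plus-stability argument would suffice. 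Two points, however, need repair. First, the triangle-inequality step requires $\mathrm{rank}(\hat R^K_n)=q$, not merely $\leq q$: if $\mathrm{rank}(\hat R^K_n)=q-j$ with $j\geq 1$, the approximate-minimiser inequality only gives the fit term a bound of $O_{\Prob}(n^{-1})+\tau j$, which is not $O_{\Prob}(n^{-1})$. Ruling out $\mathrm{rank}<q$ needs a separate argument (the paper uses consistency of $\hat R^K_n$ together with closedness of the set of matrices of rank at most $q-1$, plus the fact that $P^K_\delta\circ R^K$ already contains a nonsingular $q\times q$ in-band submatrix, so no rank-$(q-1)$ matrix can nearly match $R^K$ on the band).

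The more serious gap is the local stability inequality itself, which you correctly identify as the technical heart but do not actually prove. A Łojasiewicz-type argument only delivers $\|P^K_\delta\circ(\theta-R^K)\|_F\gtrsim\|\theta-R^K\|_F^{\alpha}$ for \emph{some} exponent $\alpha\geq 1$, which would degrade the conclusion to $O_{\Prob}(n^{-1/\alpha})$; the theorem as stated needs $\alpha=1$, and nothing in your sketch forces that. The paper's mechanism for obtaining the genuine bi-Lipschitz bound is concrete and is where analyticity and the ``almost all grids'' qualifier actually enter: for almost every grid, every $q\times q$ minor of $R^K$ is nonsingular, so one can border a nonsingular in-band $q\times q$ submatrix $C_K$ by the row $j$ and column $k$ of any off-band entry, entirely through in-band positions. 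For a rank-$q$ matrix $\theta$ near $R^K$ the vanishing of the bordered $(q+1)\times(q+1)$ determinant forces $\theta_{j,k}=q_1'E^{-1}q_2$, a locally Lipschitz function of the in-band entries; hence any off-band deviation $|B_{j,k}|$ is bounded by $\kappa\|P^K_\delta\circ(\theta-R^K)\|_F$, which contradicts $\|\theta-R^K\|_F^2>d$ with $\|P^K_\delta\circ(\theta-R^K)\|_F^2<\beta d$ for $\beta$ small. Without this (or an equivalent argument that the differential of the masking map is injective on the tangent space of the rank-$q$ variety at $R^K$), the claimed $O_{\Prob}(n^{-1})$ rate does not follow from your outline.
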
 

The condition $\inf_{|s-t|<\delta}\Prob\{U_1(s,t) =1\}>0$ asks that the event $\{O_i\times O_i\ni (s,t)\}$ have strictly positive probability as $(s,t)$ ranges over the band $|s-t|<\delta$. Intuitively, this translates to requiring that the $\delta$-fragments be reasonably spread out, for instance, we must avoid the trivial case where $O_i=O$ uniformly in $i$.

\section{Computation}\label{computation}
\subsection{General procedure}
Our estimator can be computed following steps (I)-(IV):
\begin{enumerate}
\item[(I)]  compute the patched covariance matrix $\tilde R _n^K= \{\tilde r_n(t_j,t_l)\}_{j,l=1}^{K}$, as in equation \eqref{patched_covariance};

\item[(II)] solve the optimisation problem
\begin{equation}\label{stepB_min}
\min_{{0 \preceq} \theta \in \mathbb{R}^{K\times K}} K^{-2}\left\|  \tilde R^K_n-(P^{K}_{\delta'}\circ\theta)\right\|^2_F  \quad
\textrm{subject to } \ \mathrm{rank}(\theta)\le i, 
\end{equation} 
where $\delta'<\delta$ is the effective bandwidth, for $i=\{1,\ldots,\lceil K\delta\rceil-3\}$, obtaining minimisers $\hat\theta_1,\ldots,\hat\theta_{\lceil K\delta\rceil-3}$;
\item[(III)]  calculate the fits $\{f(i)=K^{-2}\| \tilde R^K_n-(P^{K}_{\delta'}\circ\hat \theta_i) \|^2_F:i=1,\ldots,\lceil K\delta\rceil-3\}$, and the quantities $f(i)+\tau i$ for some choice of the tuning parameter $\tau>0$; and
\item[(IV)]  determine the $i^*$ that minimises $f(i)+\tau i$, and declare the corresponding optimising matrix $\hat{\theta}_{i^*}$ to be the estimator $\hat R^K_n$.
\end{enumerate}
Step (II), which requires the solution of a rank-constrained least squares problem, and step (III), which requires the selection of a tuning parameter, are discussed in the next two subsections.

\subsection{Rank-Constrained Least Squares}

The fact that the least squares penalty in step (II) involves only some of the matrix entries implies that its best rank-constrained approximation is unavailable in closed form by a simple principal component analysis. Nevertheless we can reformulate the problem as an unconstrained one, by reparametrisation: the space of positive semi-definite matrices of rank $i$ can be spanned by elements of the form $\gamma \gamma\transpose$, where $\gamma\in \mathbb{R}^{i\times K}$. This reduces (\ref{stepB_min}) to
\begin{eqnarray} \label{step1_min}
\min_{\gamma \in \mathbb{R}^{K\times i}} & & K^{-2} \left\|\tilde R^K_n-( P^{K}_{\delta'}\circ \gamma\gamma\transpose)\right\|^2_F.  
\end{eqnarray} 
To solve \eqref{step1_min}, we use the function \texttt{optim} in \texttt{R}, which implements a quasi-Newton method that makes use of the gradient of the objective function. The starting point is the natural candidate $\gamma_0=U_i\Lambda_i^{1/2}$, corresponding to the optimal rank $i$ reduction of $\tilde{R}^K_n$. Specifically, for $U\Lambda U\transpose$ the singular value decomposition of $\tilde R^K_n$, we define $U_i$ as the $n\times i$ matrix obtained by keeping the first $i$ columns of $U$ and $\Lambda_i$ as the $i\times i$ matrix obtained by keeping the first $i$ lines and columns of $\Lambda$. The objective \eqref{step1_min} is convex in $\gamma\gamma\transpose$ but not in $\gamma$ itself, so convergence to a global optimum is not guaranteed. Nevertheless, in our numerical work we observed that the computational implementation was stable, fast and  reliable. Similar stability properties were empirically observed in the band-deleted principal components analysis by matrix completion studied in \citet{descary2016functional}. \citet{wainwright} provide theoretical arguments that gradient-descent type methods can yield good optima with high probability in low-rank matrix completion problems.

\subsection{Scree-Plot Tuning Parameter Selection} \label{subsec:scree-plot}

The role of the tuning parameter $\tau$ is to prevent us from overfitting the banded matrix $\tilde R^K_n$ by selecting too high a rank since $f(i)$ is non-increasing in $i$. The key observation to choosing $\tau$, then, is that selecting a value $\tau$ immediately corresponds to selecting a rank $i_{\tau}$, the rank of the minimum obtained for that $\tau$; in turn, this yields a fit value $f(i_{\tau})$. One can plot $f(i_{\tau})$ as a function of $\tau$, as one would construct a scree-plot in principal components analysis, selecting a $\tau$ by observing an elbow in the plot, or by setting a threshold $\epsilon>0$ and requiring that $f(i_{\tau})<\epsilon$. In fact, since $f(i_{\tau})$ will change only whenever $\tau\mapsto i_{\tau}$ has a jump, we can circumvent $\tau$ entirely, and simply plot the mapping $i\mapsto f(i)$. Under our model assumption, the elbow approach can be theoretically justified: Proposition \ref{unique_completion} shows that if we could use $\tilde R^K$ instead of $\tilde R^K_n$, then we would have $f(i)>0$  when $i\leq q-1$ but $f(q)=0$. One can go beyond the scree-plot in order to determine a rank. For example, one can inspect the resulting optima, and observe how much they differ as the rank increases, or inspect how the eigenvalues of the optima evolve.

\section{Numerical Results}\label{sec:simulations}

To probe the performance of our methodology we perform simulations with two different scenarios for the true covariance function $r(s,t)$. In Scenario A, we set $r(s,t)=\sum_{j=1}^q \lambda_j \phi_j(s) \phi_j(t)$ with the eigenfunctions $\{\phi_j\}$ constructed with constant and sine functions, while in Scenario B, we set $r(s,t)=\sum_{j=1}^q \beta_j \psi_j(s) \psi_j(t)$ with the $\{\psi_j\}$ constructed as Gaussian density functions of mean $\omega_j$ and standard deviation $\sigma_j$. For both scenarios we consider $q=1,2$ and $3$. Table \ref{data_sim} gives more details on the construction of $r$. For a given covariance function $r$, we simulated $100$ samples of $n=200$ centred Gaussian processes $X_i$ such that $\cov\{X_i(s),X_i(t)\}=r(s,t), s,t \in [0,1]$, and we evaluated them on a grid of $K=50$ points; Appendix \ref{add_res} contains additional results. For each sample of curves, we constructed fragmented data by simulating random subintervals $O_1,\ldots,O_n$ of $[0,1]$ of length $\delta$. We consider $\delta=0.5,0.6,0.7,0.8$ and $0.9$. In all simulations, the matrix $P^K_{\delta'}$ is defined with $\delta'=\delta-0.1$ and we implement our method using the true rank, since repeating the optimisation problem through several rank choices and over hundreds of replications would be prohibitive by expensive. Appendix \ref{add_res} contains a study of the performance of our scree-plot approach to select the rank; we found that one typically would select the true rank except in the more challenging cases $\delta\in\{0.5,0.6\}$, where one might select a rank of $2$, though the true rank is $1$ or $3$. For each of the $100$ samples corresponding to a given combination of scenario, of rank and of $\delta$, we compute our estimator $\hat R^K_n$, and then calculate its relative error percentage re$(\hat R^K_n) = (\| \hat R^K_n - R^K\|_F / \| R^K \|_F)\times 100\%$. Table \ref{Normal estimator} gives the median and the first and third quartiles of these $100$ relative errors; we obtain median relative errors of the order of at most $15\%$ once the observation interval length reaches $0.6$, corresponding to an effective length of $0.5$.

\begin{table}[ht] 
\centering
\begin{tabular}{|c|c|c|c|}
\hline
 \multicolumn{2}{|c|}{Scenario A} & \multicolumn{2}{c|}{Scenario B} \\
  \hline
 $\lambda_1 = 1.50$ & $\phi_1(t) = 1 \phantom{\sin(t\pi )}$ & $\beta_1 = 1.50 $ & $\omega_1 =0.5, \sigma_1 = 0.60  $\\
  $\lambda_2 = 0.55$ & $\phi_2(t) = \sin(2\pi t)$ & $\beta_2 = 0.55$ &$\omega_2=0.2, \sigma_2 = 0.25$\\
   $\lambda_3 =0.20 $ & $\phi_3(t) =\sin (4 \pi t) $ & $\beta_3 = 0.20$ & $\omega_3 =0.8, \sigma_3 = 0.20$\\
        \hline
  \end{tabular} \caption{Parameter values for the different simulation scenarios}
  \label{data_sim}
\end{table}

\begin{table}[ht]
\centering
\begin{tabular}{|c|c|c|c|c|}
\hline
Scenario & $\delta$ $(\delta')$& rank $1$ & rank $2$ & rank $3$ \\
  \hline
  \multirow{5}{*}{A} 
  &$0.5$ $(0.4)$ & $16$ $(11, 19)$ & $26$ $(20, 33)$ & $34$ $(31, 38)$ \\ 
  &$0.6$ $(0.5)$& $14$ $(11, 18)$ & $17$ $(13, 21)$ & $17$ $(15, 22)$ \\ 
  &$0.7$ $(0.6)$& $14$ $(10, 18)$ & $15$ $(12, 18)$ & $16$ $(14, 19)$ \\ 
  &$0.8$ $(0.7)$& $12$ $(8, 16)$ &$13$ $(11, 17)$ & $14$ $(12, 18)$ \\ 
  &$0.9$ $(0.8)$& $9$ $(6,13)$ & $12$ $(9,15)$ & $12$ $(10,17)$ \\ 
    \hline
   \multirow{5}{*}{B} 
     &$0.5$ $(0.4)$& $15$ $(11, 18)$ & $17$ $(14, 21)$ & $20$ $(16, 23)$ \\ 
  &$0.6$ $(0.5)$& $13$ $(10, 17)$ & $14$ $(11, 19)$ & $17$ $(14, 22)$ \\ 
  &$0.7$ $(0.6)$& $13$ $(10, 17)$ & $13$ $(10, 17)$ & $15$ $(12, 19)$ \\ 
  &$0.8$ $(0.7)$& $11$ $(8, 14)$ &$12$ $(9, 17)$ & $13$ $(10, 17)$ \\ 
  &$0.9$ $(0.8)$& $9$ $(5,13)$ & $9$ $(7,14)$ & $10$ $(8,13)$ \\ 
      \hline
  \end{tabular}
  \caption{Median of the relative error percentage of our estimators for each scenario, rank and value of $\delta$. The first and third quartiles are in parentheses.}
  \label{Normal estimator}
\end{table}

\section{Data Analysis}

We test our method using daily power spot prices in the German electricity market (www.eex.com). The spot prices are recorded every hour of every working day from January 1st 2006 to September 30th 2008, giving $n=638$ curves observed on a grid of $K=24$ points. The raw dataset and its empirical covariance function $r_n^K$ are depicted in Fig.~\ref{real_data_illu}. Since we are interested in fragmented data, we constructed six functional datasets of size $N=100$ samples of fragmented data from the original dataset, each corresponding to a different value of $\delta$, by simulating random subintervals $O_1,\ldots,O_{638}$ of $[0,1]$ of length $\delta$. To test our method, we estimate the covariance function for each sample, and compare it to the empirical covariance function $r_n$ since we do not know the true covariance function. The empirical covariance function is full rank but its first three eigenfunctions explain more than $90\%$ of the total variance, so we expect our method to work reasonably well even if the condition on $K$ of Proposition \ref{unique_completion} is not exactly satisfied. The first step of our methodology is to pick the rank of our estimator, which is done using the scree plot approach explained in Section \ref{subsec:scree-plot}. The function $f(i)$ ($i=1,\ldots, 8$), obtained for one sample of each set is plotted in Fig.~\ref{real_data_find_r}. After inspection of the figure, we selected $\hat q = 3$ for the six sets of samples, and obtained the associated estimators $\hat R^K_{n,\delta}$. We calculated the empirical relative error percentage $\mathrm{EMPre}(\hat R^K_{n,\delta}) =( \| \hat R^K_n - R_n^K\|_F / \| R_n^K \|_F)\times 100\%$ of each and reported for each set the median of the $100$ resulting empirical relative errors and their first and third quartiles in Table \ref{real_data_rel_err}. Even when we observe only $50\%$ of each curve, we obtained an error smaller than $15\%$. Since the choice of $\hat q$ involves the subjective appreciation of a plot, we also repeated the analysis with $\hat q$ equal to $4$ and $5$. The results are very similar, indicating that mildly overestimating the rank is of little material importance.

\begin{figure}
\centering
\includegraphics[scale=0.8]{./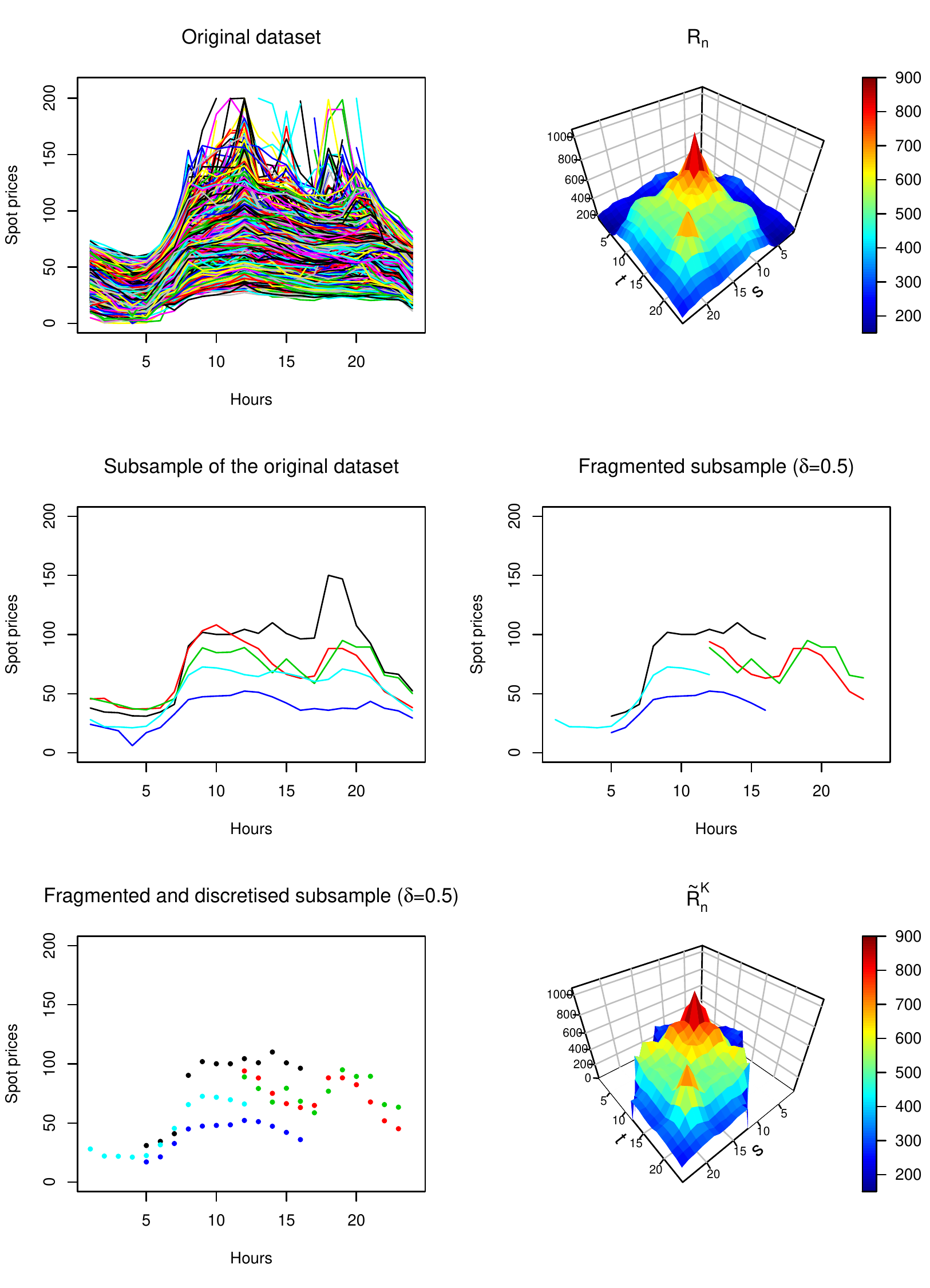} 
\caption{Spot prices in the German electricity market. First row: the complete dataset is depicted on the left, and its empirical covariance function on the right. Second row: a subsample of five complete curves is depicted on the left and a fragmented version with $\delta=0.5$ on the right. Last row: illustration of the discretised version of the fragmented subsample on the left and the empirical covariance matrix $\tilde R_n^K$ for a sample of fragmented curves with $\delta=0.5$ on the right.}
\label{real_data_illu}
\end{figure}

\begin{figure}
\centering
\includegraphics[scale=0.6]{./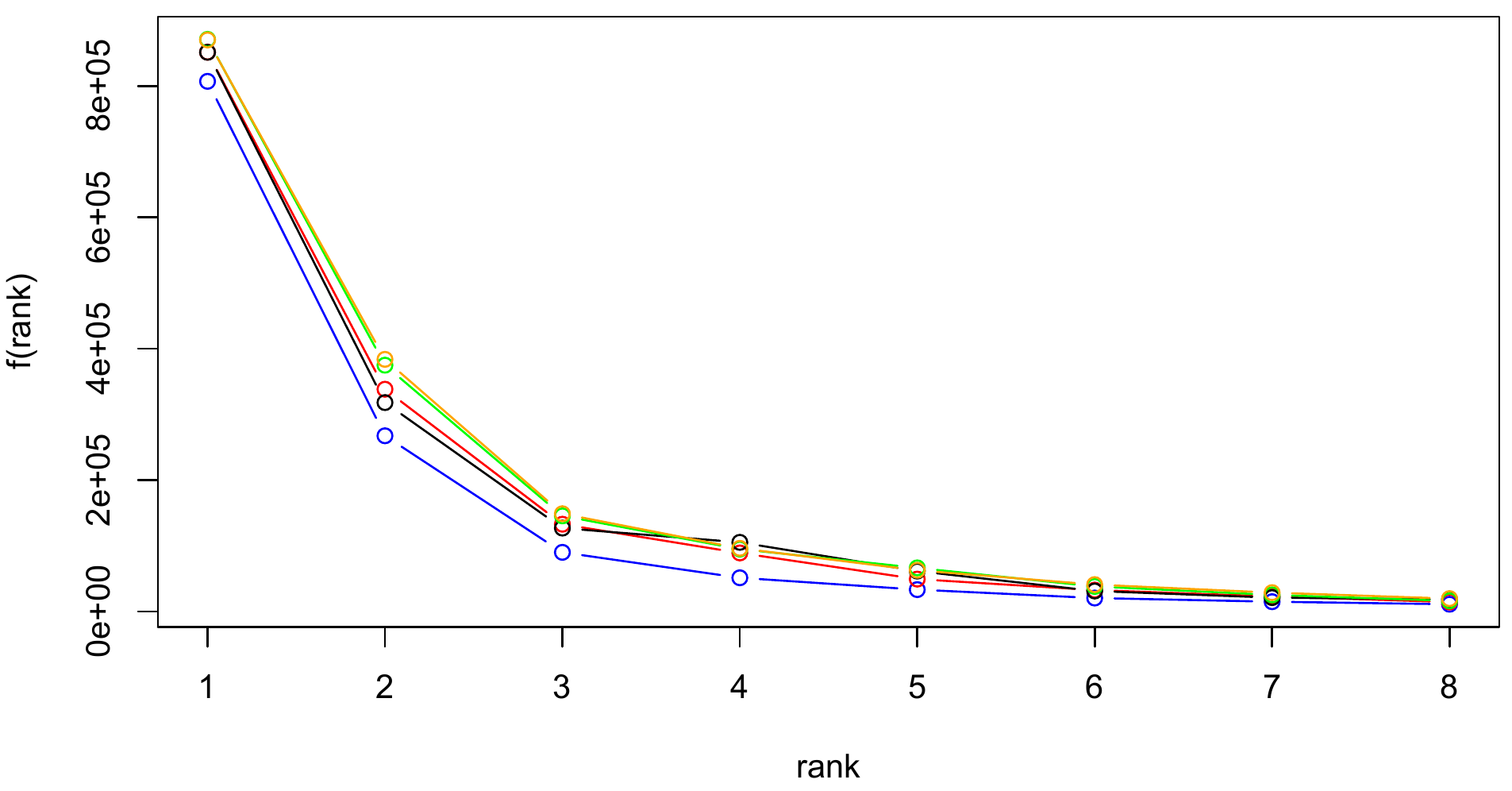}  \\
\caption{Illustration of the scree plot approach to rank selection. The curves in blue, red, black, green and orange correspond to settings with $\delta=0.5, 0.6, 0.7, 0.8$ and $0.9$.}
\label{real_data_find_r}
\end{figure}

\begin{table}[ht]
\centering
\begin{tabular}{|c|c|c|c|}
\hline
$\delta$ $(\delta')$ & $\hat q = 3$ & $\hat q = 4$ & $\hat q = 5$ \\
  \hline
   $0.4$ $(0.3)$& $21$ $(19, 25)$& $24$ $(22, 26)$ & $23$ $(21, 26)$  \\
  $0.5$ $(0.4)$& $14$ $(12, 16)$& $18$ $(16, 19)$ &  $18$ $(16, 19)$  \\ 
  $0.6$ $(0.5)$& $12$  $(11, 13)$ & $12$ $(11, 13)$  & $12$ $(11, 13)$  \\ 
  $0.7$ $(0.6)$& $9$ $(8, 10)$& $8$ $(8, 9)$ & $9$ $(8, 9)$ \\ 
  $0.8$ $(0.7)$& $6$ $(5, 7)$  &$5$ $(5, 6)$& $6$ $(5, 6)$ \\ 
   $0.9$ $(0.8)$& $4$ $(4, 5)$ &$4$ $(3, 4)$& $3$ $(3, 4)$ \\
    \hline
 \end{tabular}
\caption{Median of the empirical relative error percentage of the estimators obtained for different values $\delta$ and $\hat q$. The first and third quartiles are in parentheses.}
\label{real_data_rel_err}
\end{table}

\section{Stability to Departures from Analyticity and Rank Assumptions} \label{sec:non_analytic}

To assess to what extent the performance of our method is stable to perturbations from the analyticity and/or finite rank assumption, we carry out numerical experiments where the true covariance function is neither analytic nor finite rank. We consider centred Gaussian processes with a Mat\'ern covariance function 
$$r_{\textrm{M}}(s,t)=r_{\textrm{M}}(|s-t|)=r_{\textrm{M}}(d)= \sigma^2  \frac{2^{1-\nu}}{\Gamma(\nu)}\left\{(2\nu)^\frac{1}{2}\frac{d}{\rho}\right\}^\nu K_\nu \left\{(2\nu)^\frac{1}{2}\frac{d}{\rho}\right\},\quad s,t\in[0,1],  $$
where $K_\nu$ is the modified Bessel function of the second kind, $\Gamma$ is the gamma function, and the parameters $\nu$ and $\rho$ of the covariance function are non-negative. This is an infinite rank covariance. We consider $\nu=3/2$ and $5/2$ and $\rho=0.5$ and $0.8$. The corresponding Gaussian sample paths will be only $\lfloor \nu\rfloor$ times differentiable \citep[Section 2.7]{Stein_kriging}, so our smoothness settings correspond to sample paths that are at most once, or twice differentiable; this is a severe departure from analyticity, which implies $C^{\infty}$ paths.

Since Mat\'ern covariances are stationary, we also consider a non-stationary scenario. We take the true covariance function $r_{\textrm{M-A}}$ to be equal to $r_{\textrm{M}}+r_{\textrm{A}}$, where $r_{\textrm{A}}$ is defined as in Scenario A with $q=2$. For each combination of scenarios, and values of $\nu$ and $\rho$, we simulated $100$ samples of $n=200$ centred Gaussian processes evaluated on a grid of $K=50$ points and then constructed random fragments of length $\delta=0.5,0.6,0.7,0.8$ or $0.9$. In all simulations, the matrix $P^K_{\delta'}$ is defined with $\delta'=\delta-0.1$ and the rank $\hat q$ is set to $2$, which is conservative/suboptimal in principle. For each sample, we calculate re$(\hat R^K_n)$, the relative error percentage of our estimator, and we report the median and the first and third quartiles of these $100$ relative errors in Table \ref{Res_non_analytic}. 

Despite materially deviating from our assumptions, the method seems quite stable. The relative errors in Table \ref{Res_non_analytic} are comparable to those observed in our earlier simulations. Scenario 2, involving the covariance $r_{\textrm{M-A}}$, lends itself to the most direct comparison, since it can be seen as an additive perturbation of Scenario A with $q=2$ by a non-analytic and infinite-rank covariance. Comparing the results in the last two columns of Table \ref{Res_non_analytic} to those in the first five rows of the column corresponding to $q=2$ of Table \ref{Normal estimator}, we observe only a slight inflation of relative errors. More generally, as $\nu$ increases and the curves become smoother, the performance tends to improve. The typical assumption in functional data analysis is that the observed sample paths are at least of class $C^2$.

\begin{table}[ht]
\centering
\begin{tabular}{|c|c|c|c|c|c|}
\hline
&&\multicolumn{2}{|c|}{Scenario 1 : $r_{\textrm{M}}(s,t)$}&\multicolumn{2}{|c|}{Scenario 2 : $r_{\textrm{M-A}}(s,t)$}\\
  \hline
$\nu $&$\delta$ $(\delta')$& $\rho=0.5$  & $\rho=0.8$ & $\rho=0.5$  & $\rho=0.8$   \\
\hline
\multirow{5}{*}{$3/2$}
&  $0.5$ $(0.4)$ & $33$ (31, 37)& $22$ $(19, 24)$ & $34$ $(31, 38)$ & $30$ $(26, 33)$ \\ 
 &$0.6$ $(0.5)$& $29$ $(25, 32)$ & $21$ $(16, 23)$ & $21$ $(17, 25)$ & $17$ $(14, 20)$  \\ 
 &$0.7$ $(0.6)$& $24$ $(22, 27)$ & $17$ $(13, 21)$ & $16$ $(14, 19)$ & $15$ $(12, 19)$ \\ 
  &$0.8$ $(0.7)$& $19$ $(17, 22)$ &$14$ $(12, 18)$ & $15$ $(12, 17)$ &$14$ $(11, 17)$ \\ 
  &$0.9$ $(0.8)$& $16$ $(13, 19)$ &$11$ $(9, 14)$ & $13$ $(11, 16)$ &$11$ $(9, 15)$ \\
  \hline
\hline
\multirow{5}{*}{$5/2$}
&  $0.5$ $(0.4)$ & $29$ (26, 32)& $20$ $(17, 24)$ & $33$ $(29, 38)$ & $27$ $(24, 31)$ \\ 
 &$0.6$ $(0.5)$& $24$ $(21, 28)$ & $18$ $(15, 22)$ & $19$ $(16, 23)$ & $17$ $(14, 22)$  \\ 
 &$0.7$ $(0.6)$& $22$ $(19, 25)$ & $14$ $(11, 19)$ & $16$ $(13, 20)$ & $15$ $(12, 19)$ \\ 
  &$0.8$ $(0.7)$& $17$ $(14, 21)$ &$13$ $(10, 15)$ & $13$ $(11, 16)$ &$13$ $(10, 16)$ \\ 
  &$0.9$ $(0.8)$& $15$ $(13, 17)$ &$11$ $(8, 15)$ & $12$ $(9, 15)$ &$11$ $(9, 15)$ \\
  \hline
   \end{tabular}
  \caption{Median of the relative error percentage of our estimators for each scenario and values of $\nu$,$\rho$ and $\delta$. The first and third quartiles are in parentheses.}
  \label{Res_non_analytic}
\end{table}

\section{Irregular grids and measurement errors}  \label{sec:irr_grid_meas_error}

Our method can be adapted to cases where each fragment is of differing length, observed on variable and potentially quite irregular grids, and possibly subjected to noise corruption. In the noiseless setting, our observations are discretised fragments of $n$ independent copies of $X$ defined as

\begin{equation}\label{irreg_sample}
\{X_{ij} = X_i(t_{ij}), i=1,\ldots,n, j=1,\ldots,Q_i \},
\end{equation}
with $\{t_{ij}\}_{j=1}^{Q_i}\subset O_i$, where $O_i$ is a random subinterval of $[0,1]$ of random length $\delta_i$, and $Q_i$ is the size of the grid corresponding to the $i$th fragment. For a given positive integer $K$, define as before $\{I_{j,K}\}_{j=1}^K$ to be the regular partition of $[0,1]$ into intervals of length $1/K$ and $R^K$ to be the matrix representation of the $K$-resolution version of the covariance function $r$ of $X$. We define the $K$-resolution patched estimator of $R^K$ based on the sample (\ref{irreg_sample}), as
$$\ddot{R}_n^K(j,l) = \frac{J_{j,l}^K}{\sum_{i=1}^n\sum_{a,b}^{Q_i} W_{j,l}^K(t_{ia},t_{ib})}\sum_{i=1}^n\sum_{a,b}^{Q_i} W_{j,l}^K(t_{ia},t_{ib}) [\{X_{ij}-\ddot{\mu}_n^K(t_j)\}\{X_{il}-\ddot{\mu}_n^K(t_l)\}], $$
where $ W_{j,l}^K(t_{ia},t_{ib}) = 1(t_{ia}\in I_{j,K})1(t_{ib}\in I_{l,K})$, $J_{j,l}^K =1\{\sum_{i=1}^n\sum_{a,b}^{Q_i} W_{j,l}^K(t_{ia},t_{ib}) > 0\}$ and
$$\ddot{\mu}_{n,jl}^K(t_j) = \frac{J_{j,l}^K}{\sum_{i=1}^n\sum_{a,b}^{Q_i} W_{j,l}^K(t_{ia},t_{ib})} \sum_{i=1}^n\sum_{a,b}^{Q_i} W_{j,l}^K(t_{ia},t_{ib}) X_{ij}. $$
The estimator $\hat{r}^K_n$ of $r$ is obtained by replacing $\tilde{R}^K_n$, in steps (I)-(IV) at the beginning of $\S$\,\ref{computation} by $\ddot{R}_n^K$.  In the case of irregular grids, we overlay a regular grid of resolution $K$ on $[0,1]$, and construct the step function estimator of the patched covariance that averages within the $K^2$ induced bins partitioning the domain $[0,1]^2$. The resolution $K$ is commensurate to the $Q_i$, in the sense that if $Q_i\equiv Q$ and $\delta_i\equiv \delta$, then $K=\vartheta\times\lceil Q\delta^{-1}\rceil$, where $\vartheta\in(0,1)$ is chosen such that the average number of points falling in each of the  observed bins is sufficiently large. More generally one calculates empirical averages $\bar{Q}$ and $\bar{\delta}$ and sets $K=\vartheta\times\lceil \bar{Q}{\bar{\delta}}^{-1}\rceil$.  
Since there is now some added flexibility in the choice of $K$, one could even optimise over a choice of $\vartheta$, to exploit any slack in the bias-variance tradeoff induced by the additional averaging within bins, e.g., choosing $\vartheta$ via cross-validation. Implementation is straightforward, and we omit further details.

In the setting where our observations have been corrupted by measurement errors, the sample defined in (\ref{irreg_sample}) becomes $\{X_{ij} = X_i(t_{ij})+\epsilon_{ij}: i=1,\ldots,n, j=1,\ldots,Q_i \}, $
where the $\epsilon_{ij}$ are uncorrelated mean zero random variables of finite variance, and the matrix $\ddot{R}_n^K$ can be defined as before. However, since we know that the diagonal of $\ddot{R}_n^K$ will be corrupted by the noise, we do not use the information that it contains in our matrix completion procedure, we thus replace the matrix $P^K_{\delta'}$ by $\ddot{P}^K_{\delta'}(j,l) = 1(0<|j-l| < \lfloor K\delta' \rfloor - 1)$ in steps (I)-(IV) of Section \ref{computation}.

To probe the performance of our methodology in these new settings, we carry out a simulation study, where we consider two different types of grids. The type 1 grids are mildly irregular; for each grid $\{t_{ij}\}_{j=1}^{Q_i}$ there exists a grid of $K$ points  $\{t_l\}_{l=1}^K\in \mathcal{T}_K$ such that $\{t_{ij}\}_{j=1}^{Q_i}\subset \{t_l\}_{l=1}^K$. Type 2 grids are highly irregular, i.e., each point of the grid $\{t_{ij}\}_{j=1}^{Q_i}$ is uniformly distributed on $O_i$.
In the simulation study, the true covariance function $r(s,t)$ is constructed as in Scenario A (see Section \ref{sec:simulations}), with $q=1,2$ or $3$. We consider $n= 200$ and $400$, and the length $\delta_i$ of the $i$th fragment is uniformly distributed in $(\delta_1,\delta_2)$, with $(\delta_1,\delta_2)=(0.4,0.6),(0.5,0.7),(0.6,0.8)$ or $(0.7,0.9)$. For type 1 grids, we set $Q_i=\lceil K \delta_i\rceil$, where $K=50$, and for type 2 grids, we set $Q_i=\lceil \ddot{K} \delta_i\rceil$, where $\ddot{K}=50$, and $K=4 (5n)^{-1}\sum_{i=1}^n Q_i$. In the setting with measurement errors, the random errors $\epsilon_{ij}$ are simulated as $\textrm{N}(0,1)$. We simulate $100$ samples for each combination of grid type, scenario, sample size $n$, rank $q$ and vector $(\delta_1,\delta_2)$. For each sample we implement our method using the true rank where the matrices $P^K_{\delta'}$ and $\ddot{P}^K_{\delta'}$ are defined with $\delta'=\delta_1$. We then calculate re$(\hat R^K_n)$, the relative error percentage of our estimator, and we report the median and the first and third quartiles of these $100$ relative errors in Table \ref{grid_type1_estimator} and Table \ref{grid_type2_estimator}. 

Comparing the first few lines of Table \ref{grid_type1_estimator} and Table \ref{Normal estimator}, we observe that allowing for variable fragment lengths and variable grids has little impact on estimation when the grids are mildly irregular, and observation is noiseless. Noise has a bigger impact, particularly since it degrades the diagonal of the empirical covariance, where in principle we would have the most information when observing fragments; see Fig.~\ref{blanketVSbanded}. Specifically, we see that to attain similar performance as in the absence of noise, the sample size needs to double. Allowing the observation grid to be highly irregular has a more noticeable impact. Comparing  Tables \ref{grid_type1_estimator} and \ref{grid_type2_estimator}, one sees that to achieve performance comparable to that under mildly irregular grids, with or without noise contamination, one typically needs to double the sample size.

Overall, it would seem that grid/length irregularity and noise contamination do not substantially affect performance, as long as one allows for adapting the sample size to accommodate for the additional layer(s) of ill-posedness. This is no modest feat: we are performing nonparametric estimation of the covariance from functional data that are simultaneously censored, irregularly observed, and noisily observed. With all these layers of ill-posedness combined, it is surprising that nonparametric estimation is feasible at all: for, in the nonparametric case, one ultimately constructs locally parametric estimators, and when we combine these three layers of ill-posedness, there is scant local information to work with.

\begin{table}[ht]
\centering
\begin{tabular}{|c|c|c|c|c|}
 \hline
\multicolumn{5}{|c|}{Type 1 Grid, $n=200$ }\\
 \hline
noise&$(\delta_1, \delta_2)$& rank $1$ & rank $2$ & rank $3$ \\
\hline
\multirow{4}{*}{without}
  &$(0.4, 0.6)$ & $16$ $(14, 21)$ & $26$ $(18, 32)$ & $36$ $(33, 41)$   \\ 
 &$(0.5, 0.7)$& $15$ $(11, 17)$ & $18$ $(15, 23)$ & $19$ $(16, 23)$ \\ 
 &$(0.6, 0.8)$& $13$ $(10, 18)$  & $16$ $(14, 20)$ & $17$ $(15, 21)$ \\ 
  &$(0.7, 0.9)$& $13$ $(10, 17)$ &$14$ $(11, 17)$ & $16$ $(13, 19)$ \\ 
     \hline
   \multirow{4}{*}{with}
    & $(0.4, 0.6)$ & $22$ $(19, 25)$ & $37$ $(41, 45)$ & $45$ $(39, 51)$ \\ 
  &$(0.5, 0.7)$ & $19$ $(17, 22)$ & $25$ $(22, 28)$ & $26$ $(23, 30)$ \\ 
  &$(0.6, 0.8)$ & $18$ $(16, 22)$ & $21$ $(19, 24)$ & $23$ $(20, 26)$ \\ 
  &$(0.7, 0.9)$ & $17$ $(14, 20)$ &$18$ $(16, 21)$ & $22$ $(18, 24)$ \\  
   \hline
   \multicolumn{5}{|c|}{Type 1 Grid, $n=400$ }\\
 \hline
noise&$(\delta_1, \delta_2)$& rank $1$ & rank $2$ & rank $3$ \\
\hline
\multirow{4}{*}{without}
  &$(0.4, 0.6)$ & $11$ $(9, 13)$ & $14$ $(12, 20)$ & $35$ $(31, 38)$ \\ 
 &$(0.5, 0.7)$& $11$ $(9, 14)$ & $12$ $(10, 15)$ & $14$ $(12, 17)$ \\ 
 &$(0.6, 0.8)$& $10$ $(8, 13)$  & $12$ $(10, 14)$ & $12$ $(10, 14)$ \\ 
  &$(0.7, 0.9)$& $9$ $(7, 12)$ &$11$ $(9, 13)$ & $11$ $(9, 13)$ \\ 
     \hline
   \multirow{4}{*}{with}
    & $(0.4, 0.6)$ & $15$ $(13, 18)$ & $25$ $(20, 31)$ & $38$ $(35, 41)$ \\ 
  &$(0.5, 0.7)$ & $15$ $(13, 17)$ & $17$ $(15, 19)$ & $19$ $(17, 21)$ \\ 
  &$(0.6, 0.8)$ & $13$ $(11, 15)$ & $15$ $(14, 17)$ & $16$ $(14, 17)$ \\ 
  &$(0.7, 0.9)$ & $12$ $(10, 14)$ &$14$ $(12, 16)$ & $15$ $(13, 16)$ \\  
   \hline
  \end{tabular}
   \caption{Median of the relative error percentage of our estimators for each scenario, rank, fragments lengths and  type 1 grids. The first and third quartiles are in parentheses.}
  \label{grid_type1_estimator}
\end{table}

\begin{table}[ht]
\centering
\begin{tabular}{|c|c|c|c|c|}
 \hline
\multicolumn{5}{|c|}{Type 2 Grid, $n=200$ }\\
 \hline
noise&$(\delta_1, \delta_2)$& rank $1$ & rank $2$ & rank $3$ \\
\hline
\multirow{4}{*}{without}
  &$(0.4, 0.6)$ & $19$ $(16, 25)$ & $30$ $(25, 34)$ & $33$ $(31, 37)$ \\ 
 &$(0.5, 0.7)$& $19$ $(17, 23)$ & $24$ $(22, 26)$ & $25$ $(23, 29)$ \\ 
 &$(0.6, 0.8)$& $18$ $(15, 21)$  & $23$ $(20, 26)$ & $24$ $(21, 26)$ \\ 
  &$(0.7, 0.9)$& $18$ $(15, 21)$ &$20$ $(18, 23)$ & $22$ $(21, 25)$ \\ 
     \hline
   \multirow{4}{*}{with}
    & $(0.4, 0.6)$ & $23$ $(20, 27)$ & $45$ $(36, 53)$ & $50$ $(43, 63)$ \\ 
  &$(0.5, 0.7)$ & $23$ $(19, 26)$ & $30$ $(26, 33)$ & $33$ $(30, 38)$ \\ 
  &$(0.6, 0.8)$ & $21$ $(18, 24)$ & $26$ $(24, 30)$ & $28$ $(26, 32)$ \\ 
  &$(0.7, 0.9)$ & $21$ $(17, 23)$ &$23$ $(21, 26)$ & $26$ $(24, 29)$ \\  
   \hline
   \multicolumn{5}{|c|}{Type 2 Grid, $n=400$ }\\
 \hline
noise&$(\delta_1, \delta_2)$& rank $1$ & rank $2$ & rank $3$ \\
\hline
\multirow{4}{*}{without}
  &$(0.4, 0.6)$ & $14$ $(12, 16)$ & $21$ $(18, 24)$ & $28$ $(26, 31)$ \\ 
 &$(0.5, 0.7)$& $14$ $(12, 16)$ & $17$ $(15, 20)$ & $19$ $(17, 22)$ \\ 
 &$(0.6, 0.8)$& $14$ $(12, 16)$  & $16$ $(14, 17)$ & $18$ $(16, 20)$ \\ 
  &$(0.7, 0.9)$& $12$ $(11, 15)$ &$15$ $(13, 16)$ & $16$ $(14, 18)$ \\ 
     \hline
   \multirow{4}{*}{with}
    & $(0.4, 0.6)$ & $16$ $(14, 18)$ & $30$ $(24, 38)$ & $40$ $(36, 48)$ \\ 
  &$(0.5, 0.7)$ & $16$ $(13, 18)$ & $21$ $(18, 22)$ & $22$ $(21, 25)$ \\ 
  &$(0.6, 0.8)$ & $16$ $(13, 18)$ & $18$ $(16, 20)$ & $20$ $(19, 23)$ \\ 
  &$(0.7, 0.9)$ & $14$ $(12, 16)$ &$17$ $(15, 19)$ & $18$ $(17, 20)$ \\  
   \hline
  \end{tabular}
   \caption{Median of the relative error percentage of our estimators for each scenario, rank, fragments lengths and  type 2 grids. The first and third quartiles are in parentheses.}
  \label{grid_type2_estimator}
\end{table}

\section{Concluding Remarks}\label{concluding_remarks}

Estimating a covariance from fragments is effectively a nonparametric extrapolation problem: we wish to estimate the long-range covariation properties of a stochastic process by observing only short-range covariation, and without imposing parametric restrictions. Indeed the effective sample size to estimate covariation at distance $u$ decreases rapidly as $u$ increases, and drops to zero when $u$ exceeds $\delta$. Our assumptions, though nonparametric, allow for extrapolation because they impose a sort of rigidity. Assuming analyticity amounts to requiring that all amplitude fluctuations of the stochastic process propagate throughout the global scale $[0,1]$, i.e., there are no purely short-scale variations, which could otherwise be confounded with the long-range variations. This ensures us that the short-range variations that we do observe are genuine sub-samples of the long-range effects that we seek, and can be suitably extrapolated from by analytic continuation. 
The finite-rank restriction complements analyticity by allowing a discrete analytic continuation by matrix completion. 
One can also imagine circumstances where the recovery problem can be relaxed using qualitative knowledge on the boundary behaviour of the stochastic process $X(t)$. For instance, having Dirichlet boundary conditions stipulating that $X(0)$ and $X(1)$ are almost surely fixed but unknown directly translates to the covariance being zero at the boundary of $[0,1]^2$. Such knowledge could in principle be used as an additional constraint in the matrix completion step, though it does not immediately carry over to the matrix factor $\gamma$ in the factorisation $\gamma\gamma\transpose$.

\section*{Acknowledgements}

This research was supported by a grant from the Swiss National Science Foundation.

\appendix\label{supp_frag}

\section{Proofs} \label{sec_proof}

\begin{proof}[Proof of Proposition \ref{prop:counterexample1}]

The proof is based on an explicit construction which smoothes a classical counterexample to unique extension of real characteristic functions from a neighbourhood of zero, due to \cite{esseen1945fourier} and illustrated in Fig.~\ref{counter_ex}. Let $r_1(s,t)=\psi_1(s-t)$ and $r_2(s,t)=\psi_2(s-t)$ be two stationary covariance kernels, where
$$\psi_1(u)=e^{-|u|},\quad \psi_2(u)=\begin{cases}
    \psi_1(u),  & |u|<1, \\
     \psi_1(1)+\psi'_1(1)(|u|-1), &1\leq |u| <1-\psi_1(1)/\psi_1'(1),  \\
      0,& \text{otherwise}.
\end{cases}$$
We note that $r_1$ corresponds to the covariance function of a stationary Ornstein-Uhlenbeck process. \citet[Section 6, p. 22] {esseen1945fourier} shows that $\psi_2$ is a valid characteristic function, and thus by Bochner's theorem it must be positive definite. We now have two covariances such that $r_1=r_2$ on $\mathcal{B}_1$, but clearly $r_1\neq r_2$ outside the band $\mathcal{B}_{1}$. 
\begin{figure}
\centering
\includegraphics[scale=0.3]{./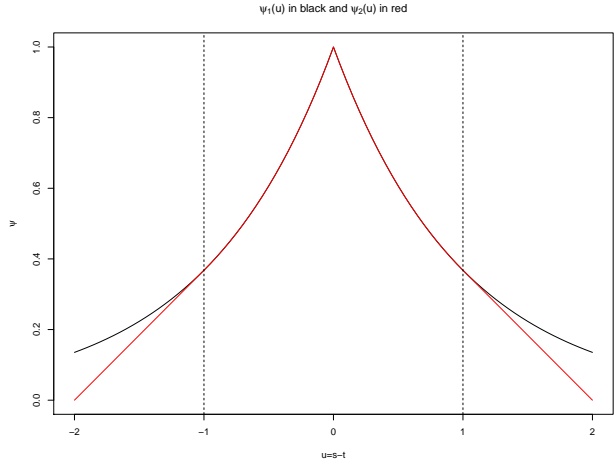} 
\caption{The functions $\psi_1(u)$ (black) and of $\psi_2(u)$ (red).}
\label{counter_ex}
\end{figure}
To construct the covariances $\kappa_1$ and $\kappa_2$, we will smooth Gaussian processes with covariances equal to $r_1$ and $r_2$. In particular, {let $\varepsilon<1/2$}, and consider the bump function 
$$\varphi(u)=1(|u|<\varepsilon)\exp\left\{-\frac{1}{1-\left(u/\varepsilon\right)^2}\right\}.$$
This is $C^{\infty}$ everywhere and obviously supported on $(-\varepsilon,\varepsilon)$. It is the classical example of a non-analytic yet $C^{\infty}$ function. Let $X_1$ and $X_2$ be zero-mean Gaussian processes on $(-\varepsilon-\pi,\pi+\varepsilon)$ with covariances $r_1$ and $r_2$ restricted to $(-\varepsilon-\pi,\pi+\varepsilon)^2$. Define the two processes $Y_1$ and $Y_2$  on $(-\pi,\pi)$ as
$$Y_j(t):=\int_{-\varepsilon}^{\varepsilon}\varphi(u)X_j(t-u)du,\quad j=1,2.$$
\noindent By their very definition, these are $C^{\infty}$ on $(-\pi,\pi)$ with probability 1. Let $\kappa_1$ and $\kappa_2$ be their covariances. We will now show that this specific construction $\kappa_1,\kappa_2$ satisfies all the requirements of Proposition 1. 

We first prove that $\kappa_1=\kappa_2$ on $\mathcal{B}_{1-2\varepsilon}$ but $\kappa_1\neq \kappa_2$, otherwise. Notice that for any $(u,v)$, we may write
\begin{eqnarray*}
\kappa_j(u,v)=E\{{Y}_j(u){Y}_j(v)\}&=&\iint\varphi(s)\varphi(t) E \{X_j(u-s)X_j(v-t)\}dsdt\\
 &=& \iint\varphi(s)\varphi(t)r_j(u-s,v-t)dsdt,
\end{eqnarray*}
for $j=1,2$, and where $s$ and $t$ range in $(-\varepsilon,\varepsilon)$, which implies that $|s-t|<2\varepsilon$. We distinguish two cases:
\begin{enumerate}
\item If $|u-v|\leq1 - 2\varepsilon$, then $|(u-s)-(v-t)|\leq |u-v|+|t-s|<1 - 2\varepsilon +2\varepsilon =1$,
and so $r_1(u-s,v-t)=r_2(u-s,v-t)$. This implies that $\kappa_1(u,v)=\kappa_2(u,v)$, and thus that $\kappa_1$ and $\kappa_2$ are equal on $\mathcal{B}_{1-2\varepsilon}$.
\item If $|u-v| > 1 - 2\varepsilon$, then $\{(s,t) \in (-\varepsilon,\varepsilon)^2:|(u-s)-(v-t)|\geq 1\} \neq \emptyset$. If follows that there exist some $s,t \in (-\varepsilon,\varepsilon)$ for which $r_1(u-s,v-t)$ is not equal to $r_2(u-s,v-t)$, and thus $\kappa_1(u,v)\neq {\kappa}_2(u,v),$ whenever $|u-v|> 1 - 2\varepsilon$.
\end{enumerate}

We now show that $\kappa_1$ and $\kappa_2$ have the same trigonometric eigenfunctions. First note that $\kappa_j$ is stationary: 
\begin{eqnarray*}
\kappa_j(u+a,v+a)&=&\iint\varphi(s)\varphi(t) E \{X_j(u+a-s)X_j(v+a-t)\}dsdt\\
&=&\iint\varphi(s)\varphi(t) E \{X_j(u-s)X_j(v-t)\}dsdt\\
&=&\kappa_j(u,v).
\end{eqnarray*}
By stationarity and symmetry of $\kappa_j$, there exists a symmetric function $\phi_j(u)$ such that $\kappa_j(x,y)=\phi_j(x-y)$. Now let $\sum_{n}\theta^{(j)}_n e^{inu}$ be the Fourier series of $\phi_j(u)$. We have:
$$\kappa_j(s,t)=\phi_j(s-t)=\sum_{n}\theta_n^{(j)} e^{in(s-t)}=\sum_{n}\theta_n^{(j)} e^{ins}e^{-int},$$ 
which shows that the Mercer decomposition of $\kappa_j$ is in terms of complex exponentials, and hence the eigenfunctions of $\kappa_j$ are obtained as products of sines and cosines that do not depend on $j$. 

Finally, we observe that the eigenvalues of $\kappa_j$ are the Fourier coefficients of the function $\phi_j$. Since $Y_j$ is $C^p$ for any $p$ almost surely, so must be $\phi_j(u)=E\{Y_j(0)Y_j(u)\}$, and its Fourier coefficients, i.e., the eigenvalues of ${\kappa}_j$, thus decay at any polynomial rate.
\end{proof}

\begin{proof}[Proof of Proposition \ref{unique_completion}]
We will first prove that $R^K$ is the unique solution of the matrix completion problem \ref{theoretical_min_problem}. Assumption \ref{analyticity} implies that all minors of order $q$ of the matrix $R^K$ are all non-zero, almost everywhere on $\mathcal{T}_K$ \citep[Theorem 4]{descary2016functional}. This implies that the rank of $R^K$ is equal to $q$, and also that the rank of $P^K_\delta\circ R^K$ is at least equal to $q$ since the condition $K>\delta^{-1}(2q+1)$ implies that $P^K_\delta\circ R^K$ and $R^K$ contain a common submatrix of dimension $q\times q$. It now suffices to show that there exists a unique rank $q$ completion of $R^\star$, where $R^\star$ is equal to $R^K$ on the band $B_\delta$ and unknown on $([K]\times[K])/B_\delta$. To do this, we adapt the strategy of \citet[Theorem 2]{descary2016functional}, i.e., we use an iterative procedure to complete the matrix $R^\star$. Due to the band pattern of the observed entries of $R^\star$ and the condition $K>\delta^{-1}(2q+1)$, it is possible to find a submatrix A of $R^\star$ of dimension $(q+1)\times (q+1)$ with only one unknown entry, denoted $x^\star$. The determinant of $A$ can be written as $ax^\star +b$, where $a$ is a minor of order $q$, and thus $a\neq 0$. We look for a rank $q$ completion so det$(A)$ has to be set equal to zero, which leads to the linear equation $ax^\star +b=0$. Since $a\neq 0$, this equation has a unique solution, and it is then possible to impute $x^\star$. The condition $K>\delta^{-1}(2q+1)$ guarantees that it is possible to apply this procedure iteratively until all the missing entries are determined, which allows us to uniquely complete the matrix $R^\star$ such that the resulting matrix be of rank $q$, and thus this unique completion has to be $R^K$ which is of rank $q$. The equivalence between problems \ref{theoretical_min_problem} and \ref{theoretical_lagrange} follows directly from \citet[Proposition 2]{descary2016functional}.\end{proof}

Before proving Theorem 1, we first prove that, when restricted on the band $\B_\delta$, the empirical matrix $\tilde R^K_n$ is a consistent estimator of $R^K$.

\begin{lemma} \label{lemme_kraus}In the notation and framework of Section \ref{problem_statement}, let Assumption \ref{analyticity} hold and suppose that $E(\|X\|^4_{L^2})<\infty$. If there exists $\epsilon>0$ such that $$\sup_{(s,t)\in\B_\delta}\Prob\left\{n^{-1} \sum_{i=1}^nU_i(s,t)\le \epsilon\right\}=O(n^{-2}),$$ 
then $\E\left\| \tilde R^K_n - (P^K_\delta \circ R^K) \right\|^2_F \le K^2 C n^{-1}$, where $C$ is a constant.
\end{lemma}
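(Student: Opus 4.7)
The plan is to exploit the banded support structure to reduce the matrix bound to a uniform pointwise bound. Since both $\tilde R^K_n$ and $P^K_\delta\circ R^K$ vanish off the band $B_\delta$, we have
\begin{equation*}
\E\|\tilde R^K_n - (P^K_\delta\circ R^K)\|_F^2 = \sum_{(j,l)\in B_\delta}\E\bigl[\tilde r_n(t_j,t_l) - r(t_j,t_l)\bigr]^2,
\end{equation*}
and since $|B_\delta|\le K^2$, it suffices to establish a pointwise bound $\E[\tilde r_n(s,t)-r(s,t)]^2 \le C/n$ that holds uniformly in $(s,t)\in\B_\delta$.

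For the pointwise bound, I would rewrite the Kraus-type estimator as $\tilde r_n(s,t) = \hat M_n(s,t) - \hat\mu_{n,st}(s)\hat\mu_{n,st}(t)$ on $\{N_{st}>0\}$, with $\hat M_n(s,t) = N_{st}^{-1}\sum_i U_i(s,t)X_i(s)X_i(t)$ and $N_{st}=\sum_i U_i(s,t)$, and then split the error on the event $E_n=\{N_{st}/n>\epsilon\}$. On $E_n$, condition on $U_1,\ldots,U_n$: both $\hat M_n(s,t)$ and $\hat\mu_{n,st}(\cdot)$ are averages of at least $n\epsilon$ i.i.d.\ summands whose second moments are finite, the latter because Assumption~\ref{analyticity} combined with $\E\|X\|_{L^2}^4<\infty$ gives $\sup_{s,t}\E\{X(s)^2X(t)^2\}<\infty$ via the finite-rank expansion in bounded analytic eigenfunctions. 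This yields conditional MSE of order $(n\epsilon)^{-1}$ for $\hat M_n$ and for the $\hat\mu$'s, and a routine decomposition of the product $\hat\mu_{n,st}(s)\hat\mu_{n,st}(t)$ around $\mu(s)\mu(t)$ propagates the $1/n$ rate to $\tilde r_n-r$. On $E_n^c$, the hypothesis supplies $\Prob(E_n^c) = O(n^{-2})$; Cauchy--Schwarz gives $\E[(\tilde r_n-r)^2 1_{E_n^c}] \le \{\E(\tilde r_n-r)^4\}^{1/2}\{\Prob(E_n^c)\}^{1/2}$, and one needs only the crude bound that $\E(\tilde r_n-r)^4$ grows at most polynomially in $n$, which is immediate since $|\tilde r_n|$ is bounded by a sample average of $|X_i(s)X_i(t)|$ plus a squared sample average, each of which has bounded fourth moment along a bounded grid-dependent power of $n$. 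The two contributions combine to $O(1/n)$, uniformly in $(s,t)\in\B_\delta$.

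Summing the uniform pointwise bound over at most $K^2$ entries of $B_\delta$ yields the stated $K^2 C n^{-1}$ bound. The main obstacle is the ratio-of-sums structure of $\tilde r_n$: the random denominator $N_{st}$ may be arbitrarily small, and the centering $\tilde\mu_{n,st}$ is not a global mean estimator but depends on the pair $(s,t)$. The decisive device is the probabilistic hypothesis $\sup_{(s,t)\in\B_\delta}\Prob\{N_{st}/n\le\epsilon\}=O(n^{-2})$, which lets one discard the problematic small-denominator event at the cost of a single Cauchy--Schwarz factor, without degrading the $1/n$ rate; everything else is standard fourth-moment bookkeeping enabled by the finite-rank, analytic structure of Assumption~\ref{analyticity}.
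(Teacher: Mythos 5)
Your overall architecture --- reduce to a uniform pointwise bound over the at most $K^2$ band entries, write $\tilde r_n=\hat M_n-\tilde\mu_{n,st}(s)\tilde\mu_{n,st}(t)$, condition on $U_1,\dots,U_n$, and split on the event $E_n=\{N_{st}/n>\epsilon\}$ --- is the same as the paper's (which follows Kraus's Proposition 1(b) via the three-term decomposition $\tilde R^K_n-\breve R^K_n$, $\breve R^K_n-\bar R^K$, $\bar R^K-P^K_\delta\circ R^K$), and your treatment of the good event is sound. The gap is in your treatment of $E_n^c$. First, the Cauchy--Schwarz step gives $\E[(\tilde r_n-r)^2\1_{E_n^c}]\le\{\E(\tilde r_n-r)^4\}^{1/2}\{\Prob(E_n^c)\}^{1/2}$ with $\{\Prob(E_n^c)\}^{1/2}=O(n^{-1})$, so to preserve the $n^{-1}$ rate you need $\E(\tilde r_n-r)^4=O(1)$; your stated requirement that it ``grows at most polynomially in $n$'' is not enough, since a growth rate $n^a$ with $a>0$ yields $O(n^{a/2-1})$ and destroys the bound. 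Second, and more fundamentally, even finiteness of $\E(\tilde r_n-r)^4$ is not available under the stated hypotheses: $\tilde r_n^4$ is (after bounding the centring) controlled by averages of $Z_i(s)^8$-type quantities, i.e.\ it requires eighth moments of $X(s)$, whereas $\E(\|X\|^4_{L^2})<\infty$ together with the finite-rank expansion only delivers fourth moments of the scores and hence of $X(s)$ pointwise. So the quantity you invoke may be infinite.

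The paper sidesteps this by never taking an unconditional fourth moment of the estimator. It conditions on the $U_i$ throughout, so that the only object that must be controlled on the bad event is the purely $U$-measurable ratio $n^2I(s,t)/N_{st}^2$, which is deterministically at most $n^2$ there; multiplied by $\Prob(E_n^c)=O(n^{-2})$ this contributes $O(1)$, while the $X$-dependent factors enter only through conditional second moments of $W_i=Z_iZ_i-r$ and of products of centred sample means, all of which need only fourth moments of $X$. Your argument is rescuable along the same lines: on $E_n^c\cap\{N_{st}\ge1\}$, bound $|\tilde r_n|$ by $N_{st}^{-1}\sum_iU_i\{Z_i(s)^2+Z_i(t)^2\}/2$, apply Jensen to get $\E(\tilde r_n^2\mid U)\le C$ with $C$ depending only on fourth moments, and conclude $\E[\tilde r_n^2\1_{E_n^c}]\le C\,\Prob(E_n^c)=O(n^{-2})$ --- no Cauchy--Schwarz and no eighth moments needed. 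As written, however, the bad-event step does not go through.
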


\begin{proof}
{
Define $Z_i(t) = X_i(t) - E\{X(t)\}$ and let the matrices $\breve{R}^K_n$ and $\bar R^K$ be defined respectively as  
$$\breve{R}^K_n(i,j) =  \breve{r}_n (t_i,t_j)  \textrm{, where } \breve{r}_n (s,t)=  \frac{I(t,s)}{\sum_{i=1}^nU_i(s,t)}\sum_{i=1}^nU_i(s,t) Z_i(s)Z_i(t),$$
and
$$\bar R^K (i,j) = I(t_i,t_j) r(t_i,t_j) . $$
Using these two new matrices, we have the following inequality
\begin{eqnarray}\label{decompo}
E\left\| \tilde R^K_n - (P^K_\delta \circ R^K) \right\|_F^2 &\le& 2 E\left\|\tilde R^K_n - \breve{R}_n^K\right \|_F^2 +4 E\left\| \breve{R}_n^K - \bar R^K\right \|_F^2 \label{decompo} \\
&&+ 4E\left\| \bar R^K -    (P^K_\delta \circ R^K)\right \|_F^2. \nonumber
\end{eqnarray}
Our proof follows essentially the steps of the one of \citet[Proposition 1(b)]{kraus2015components}, so we present it briefly.
Recall that the entries of the matrices $\tilde R^K_n, \breve{R}_n^K, \bar R^K$ and $P^K_\delta$ are defined as zero on $([K]\times[K])/B_{\delta}$. The first term on the right hand side of \eqref{decompo} can be written as  
\begin{eqnarray*}
E\left\|\tilde R^K_n - \breve{R}_n^K \right\|_F^2 &=& E \sum_{(i,j) \in B_\delta} I(t_i,t_j)\left\{\tilde\mu_{t_i,t_j}(t_i) - \mu(t_i)\right \}^2 \left\{\tilde\mu_{t_i,t_j}(t_j) - \mu(t_j)  \right\}^2 \\
&=& E \sum_{(i,j) \in B_\delta}\frac{I(t_i,t_j)}{\left\{\sum_{m=1}^nU_m(t_i,t_j)\right\}^4} \left\{\sum_{m=1}^nU_m(t_i,t_j)Z_m(t_i)\right\}^2 \left\{\sum_{m=1}^nU_m(t_i,t_j)Z_m(t_j)\right\}^2,
\end{eqnarray*}
where the above quantity is dominated by
\begin{equation}
n^{-2} \sum_{(i,j)\in B_\delta} E \left[\frac{n^2I(t_i,t_j)}{\{\sum_{m=1}^nU_m(t_i,t_j)\}^2} \right] \left[\left\{\E Z_1(t_i)^4\E Z_1(t_j)^4\right\}^{1/2} +r(t_i,t_j)^2\right]. \label{bound_C1}
\end{equation}
}

Since $\E(\| X\|^4_{L^2})<\infty$, and since the eigenfunctions of $r$  being real analytic implies that $X$ is analytic, we have that the second term between brackets is bounded. Now note that the first expectation can be written as
\begin{eqnarray*} 
E \left[\frac{n^2I(t_i,t_j)}{\{\sum_{m=1}^nU_m(t_i,t_j)\}^2} \right] & = &\E\left[\frac{n^2I(t_i,t_j)}{\{\sum_{m=1}^nU_m(t_i,t_j)\}^2} 1\left\{n^{-1}\sum_{m=1}^nU_m(t_i,t_j) > \epsilon\right\}\right] \\
&&+ \E \left[\frac{n^2I(t_i,t_j)}{\{\sum_{m=1}^nU_m(t_i,t_j)\}^2} 1\left\{n^{-1}\sum_{m=1}^nU_m(t_i,t_j) \le \epsilon\right\} \right].
\end{eqnarray*}
The first summand is clearly bounded from above by $\epsilon^{-2}$ and the second one by $$n^2\sup_{(s,t)\in\B_\delta}\Prob\left\{n^{-1} \sum_{i=1}^nU_i(s,t)\le \epsilon\right\} = n^2 O(n^{-2}).$$ We can then conclude that $\E\|\tilde R^K_n - \breve{R}_n^K \|_F^2 \le K^2C_1n^{-2}$, where $C_1$ is a constant that bounds from above each term of the summation in (\ref{bound_C1}).

We now prove that $\E\| \breve{R}_n^K - \bar R^K \|_F^2\le K^2 C_2 n^{-1}$. Let $W_i(t,s) = Z_i(t)Z_i(s) - r(t,s)$, then
\begin{eqnarray*}
\E\left\| \breve{R}_n^K - \bar R^K \right\|_F^2 &=&  \E \sum_{(i,j) \in B_\delta} \left\{\frac{I(t_i,t_j)}{\sum_{m=1}^n U_m(t_i,t_j)} \sum_{m=1}^n U_m(t_i,t_j) W_m(t_i,t_j)\right\}^2 \\
& = & \sum_{(i,j) \in B_\delta} \sum_{k=1}^n \sum_{l=1}^n \E \left[\frac{I(t_i,t_j)}{\{\sum_{m=1}^n U_m(t_i,t_j)\}^2}  U_k(t_i,t_j) W_k(t_i,t_j)U_l(t_i,t_j) W_l(t_i,t_j)\right] \\
&=& \sum_{(i,j) \in B_\delta} \frac{1}{n^2}\sum_{k=1}^n \E\left[\frac{n^2I(t_i,t_j)U_k(t_i,t_j)}{\{\sum_{m=1}^n U_m(t_i,t_j)\}^2} \right]\E W_k(t_i,t_j)^2.
\end{eqnarray*}

Using the same arguments as before, we obtain that 
\begin{eqnarray*}
\E\left\| \breve{R}_n^K - \bar{R}^K\right \|_F^2 &\le& n^{-1}  \left[\epsilon^{-2} + n^2\sup_{(s,t)\in\B_\delta}\Prob\left\{n^{-1} \sum_{i=1}^nU_i(s,t)\le \epsilon\right\}  \right] \sum_{(i,j) \in B_\delta} \E\{W_1(t_i,t_j)\} \\
&\le& n^{-1} K^2 C_{2}.
\end{eqnarray*}

Finally, for the third term on the right hand side of (\ref{decompo}) we have
\begin{eqnarray*}
\E\left\| \bar{R}_n^K -  (P^K_\delta \circ R^K) \right \|_F^2 &=& \sum_{(i,j) \in B_\delta}   \E \left[r(t_i,t_j)\{I(t_i,t_j)-1\}\right]^2\\
&=& \sum_{(i,j) \in B_\delta} r(t_i,t_j)^2 \left[1 - \Prob\left\{\sum_{m=1}^n U_m(t_i,t_j)>0\right\}\right]  \\
&=& \sum_{(i,j) \in B_\delta} r(t_i,t_j)^2  \Prob\left\{\sum_{m=1}^n U_m(t_i,t_j)=0\right\}  \\
&\le& \left[  \sup_{(s,t)\in\B_\delta}\Prob\left\{n^{-1} \sum_{i=1}^nU_i(s,t)\le \epsilon\right\} \right]  \sum_{(i,j) \in B_\delta}  r(t_i,t_j)^2\\
&\le& C_3 n^{-2} K^2.
\end{eqnarray*}
Combining the results obtained for the first three term on the right hand side of (\ref{decompo}) leads to the result.

\end{proof}

\begin{proof}[Proof of Theorem \ref{thm:consistency_R}]
We first write 
\begin{eqnarray*}\int_0^1 \! \! \!\int_0^1\{\hat r_n^K(x,y) - r(x,y)\}^2dxdy &\leq& 2\int_0^1 \! \! \! \int_0^1  \{\hat r^K_n(x,y) -r^K(x,y) \}^2dxdy\\
&&+ 2\int_0^1\! \! \! \int_0^1\{r^K(x,y) -r(x,y) \}^2dxdy\\
&=&2K^{-2}\left\| \hat{R}^K_n-R^K \right\|^2_{\mathrm{F}}+2\int_0^1\! \! \! \int_0^1 \{r^K(x,y) -r(x,y) \}^2dxdy.
\end{eqnarray*}
The second term on the right hand side is straightforward to deal with via a Taylor expansion: 
\begin{eqnarray*}
2\int_0^1 \! \! \!\int_0^1 \{r(x,y)-r^K(x,y)\}^2dxdy&=&2\sum_{i,j=1}^{K}\int_{I_{i,K}} \! \!\int_{I_{j,K}}\{r(x,y)-r(t_i,t_j)\}^2dxdy\\
&\leq&2\sum_{i,j=1}^{K}\int_{I_{i,K}}\! \!\int_{I_{j,K}}2K^{-2}\sup_{(x,y)\in I_{i,K}\times I_{j,K} }\left\|\nabla r(x,y)\right\|^2_2\\
&\le&4K^{-2}\sup_{(x,y)\in [0,1]^2}\|\nabla r(x,y)\|^2_2.
\end{eqnarray*}

The main task will now be to show that the term $K^{-2}\left\| \hat{R}^K_n-R^K \right\|^2_{\mathrm{F}}=O_{\mathbb{P}}(n^{-1})$, almost everywhere on $\mathcal{T}_K$. Our strategy will be to follow the approach of \citet[Theorem 3]{descary2016functional} by adapting their argument to the extrapolation rather than the interpolation setting. In particular we will show that $\hat{R}^K_n$ and $\textrm{rank}(\hat{R}^K_n)$ are consistent estimators of $R^K$ and $q$ respectively, and we will then use \citet[Theorem 3.4.1]{emp_pro}.

We assume without loss of generality that the data have been rescaled so that $K^{-1}\mathrm{trace}(\tilde R^K_n)=1$.  Define $\Theta_K$ the space of  $K\times K$ positive semi-definite matrices of trace at most $K$ and consider the functionals

$$\mathbb{S}_{n,K}:\Theta_{K}\rightarrow [0,\infty),\qquad\mathbb{S}_{n,K}(\theta)=\underset{\mathbb{M}_{n,K}(\theta)}{\underbrace{K^{-2}\|{P}^{K}_\delta \circ(\theta-\tilde R_{n}^{K})\|^2_{\mathrm{F}}}}+\tau \mathrm{rank}(\theta),$$
$$S_{K}:\Theta_{K}\rightarrow [0,\infty),\qquad S_{K}(\theta)=\underset{M_K(\theta)}{\underbrace{K^{-2}\|P^K_\delta\circ(\theta-R^{K})\|^2_{\mathrm{F}}}}+\tau \mathrm{rank}(\theta).
$$
Note that, since $K>\delta^{-1}(2q+1)$, Theorem \ref{unique_completion} implies that for almost all grids, $R^K$ is the unique minimizer of $S_{K}$, for all $\tau>0$ sufficiently small. From now on, fix such a grid, and let $\tau>0$ be sufficiently small. 

To establish that $\hat R^K_n$ is consistent for $R^K$, we observe that

\begin{eqnarray*}
 \left|\mathbb{S}_{n,K}(\theta)-S_{K}(\theta)\right| &=& \left| \mathbb{M}_{n,K}(\theta)-M_{K}(\theta)\right| \\
&=& K^{-2} |~\|P^{K}_\delta \circ (\theta - \tilde R_{n}^{K})\|_{F}^{2} - \|P^{K}_\delta \circ (\theta - R^{K})\|_{F}^{2}| \\
&\leq& K^{-2} |~\|P^{K}_\delta \circ (\theta - \tilde R_{n}^{K})\|_{F} - \|P^{K}_\delta \circ (\theta - R^{K})\|_{F}|\\
&&\qquad\quad\times(\|P^{K}_\delta \circ (\theta - \tilde R_{n}^{K})\|_{F} + \|P^{K}_\delta \circ (\theta - R^{K})\|_{F}) \\
&\leq& K^{-2} \|P^{K}_\delta \circ (\tilde R_{n}^{K} - R^{K})\|_F~(2\|\theta\|_{F} + \|\tilde R^{K}_n\|_F + \|R^{K}\|_F).
\end{eqnarray*}

Consequently $\sup_{\theta\in\Theta_K} \left|\mathbb{S}_{n,K}(\theta)-S_{K}(\theta)\right|\stackrel{n\rightarrow\infty}{\rightarrow}0$ almost surely, and by lower semicontinuity of $S_{K}(\theta)$ and the fact that it is uniquely minimized at at $R^K$, we obtain consistency of $\hat{R}^K_n$ for $ R^K$, noting that $\hat R^K_n\in\Theta_K$ \citep[Corollary 3.2.3]{emp_pro}. 

We will next argue by contradiction in order to establish that $\mathrm{rank}(\hat{{R}}^K_n)$ is consistent for the true rank. Assuming that this is not the case, there exist $\epsilon > 0$, $\delta > 0$ and a subsequence $\{n_{j}\}$ such that $\Prob\{|\mathrm{rank}(\hat{R}^K_{n_{j}}) - q| > \epsilon\} > \delta$ for all $j \geq 1$. Therefore, $\Prob\{\mathrm{rank}(\hat{R}^K_{n_{j}}) \neq q\} > \delta$ for all $j \geq 1$, and there exist possibly two subsequences $\{j_{l}\}$ and $\{k_{l}\}$ such that $\Prob\{\mathrm{rank}(\hat{R}^K_{j_{l}}) > q\} > \delta/2$ and $\Prob\{\mathrm{rank}(\hat{R}^K_{k_{l}}) < q\} > \delta/2$ for all $l \geq 1$. The second case is impossible by consistency of $\hat{R}^K_n$ and the fact that matrices of rank at most $q-1$ comprise a closed set. For the first case, since $\hat{R}^K_{j_{l}}$ converges to $R^K$ in probability, we can extract a further subsequence $\{j_{l_{m}}\}$ such that $\mathrm{rank}(\hat{R}^K_{j_{l_{m}}}) > q$ for all $m \geq 1$ and $\hat{R}^K_{j_{l_{m}}}$ converges to $R^K$ as $m \rightarrow \infty$. It can be assumed that $\Prob\{\mathrm{rank}(\hat{R}^K_{j_{l_{m}}}) > q\} > \delta/2$ for all $m \geq 1$, and $\hat{R}^K_{j_{l_{m}}}$ converges to $R^K$ as $m \rightarrow \infty$ almost surely (or take further subsequences). Hence, the probability of the set where both of these events hold has is at least $\delta/2$. If we restrict to this set, and use the fact that $\hat{R}^K_{j_{l_{m}}}$ is a minimizer, we obtain
\begin{eqnarray} \label{s1}
\mathbb{M}_{n,K}(\hat{R}^K_{j_{l_{m}}}) + \tau(q+1) &=& K^{-2} \|P^K_\delta \circ (\hat{R}^K_{j_{l_{m}}} - \tilde R_{n}^K)\|_{F}^{2} + \tau(q+1) \nonumber \\
&\leq& K^{-2} \|P^K_\delta \circ (\hat{R}^K_{j_{l_{m}}} - \tilde R_{n}^K)\|_{F}^{2} + \tau\mathrm{rank}(\hat{R}^K_{j_{l_{m}}}) \nonumber \\
&\leq& \inf_{\theta \in \Theta_{K} \ : \ \mathrm{rank}(\theta) = q} \{K^{-2} \|P^K_\delta \circ (\theta - \tilde R_{n}^K)\|_{F}^{2} + \tau\mathrm{rank}(\theta)\} \nonumber \\
&=& \inf_{\theta \in \Theta_{K} \ : \ \mathrm{rank}(\theta) = q} K^{-2} \|P^K_\delta \circ (\theta - \tilde R_{n}^K)\|_{F}^{2} + {\tau}q \nonumber \\
&=& \inf_{\theta \in \Theta_{K} \ : \ \mathrm{rank}(\theta) = q} \mathbb{M}_{n,K}(\theta) + {\tau}q,
\end{eqnarray}
for all $m \geq 1$. But $\sup_{\theta \in \Theta_{K}} |\mathbb{M}_{n,K}(\theta) - M_{K}(\theta)| \rightarrow 0$ almost surely, so $\mathbb{M}_{n,K}(\hat{R}^K_{j_{l_{m}}}) - M_{K}(\hat{R}^K_{j_{l_{m}}}) \rightarrow 0$, while continuity also yields $M_{K}(\hat{R}^K_{j_{l_{m}}}) \rightarrow M_{K}(R^K) = 0$. It must therefore be that $\mathbb{M}_{n,K}(\hat{R}^K_{j_{l_{m}}}) \rightarrow 0$. On the other hand, the sequence of functions $\mathbb{M}_{n,K}(\theta)$ are almost surely  equi-Lipschitz continuous on the set $\{\theta \in \Theta_{K} \ : \ \mathrm{rank}(\theta) = q\}$, so that by uniform convergence
$$ \inf_{\theta \in \Theta_{K} \ : \ \mathrm{rank}(\theta) = q} \mathbb{M}_{n,K}(\theta) \rightarrow \inf_{\theta \in \Theta_{K} \ : \ \mathrm{rank}(\theta) = q} M_{K}(\theta) = 0.$$
Combining these and and equation \eqref{s1} leads to the contradiction that $\tau \leq 0$. It follows that $d(\hat{R}^K_{n},R^K) \rightarrow 0$ in probability as $n \rightarrow \infty$, where
$$d^2(\theta,R^K)={K^{-2}\|\theta-R^K\|^2_F + {\tau}|\mathrm{rank}(\theta)-\mathrm{rank}(R^K)|}.$$
In order to establish the rate, define
$$
\Delta(\theta)=S_{K}(\theta)-S_{K}(R^K)= K^{-2}\| P^K_\delta \circ(\theta-R^K)\|^2_{\mathrm{F}} +\tau (\mathrm{rank}(\theta)-q).
$$
Let $\eta^{2} < \tau$ and observe that, for any $\theta$ with $\mathrm{rank}(\theta) \neq q$, it must be that $d^{2}(\theta,R^K) \geq \tau|\mathrm{rank}(\theta) - q| \geq \tau > \eta^{2}$, which yields  $d(\theta,R^K) > \eta$. It follows that no matrix $\theta$ with $\mathrm{rank}(\theta) \neq q$ can satisfy $\gamma/2 < d(\theta,R^K) < \gamma$ for $\gamma < \eta$. Hence,
\begin{eqnarray*}
\inf_{\theta \in \Theta_{K} \ : \ \gamma/2 < d(\theta,R^K) < \gamma} \Delta(\theta) = \inf_{\theta \in \Theta_{K} \ : \ \gamma/2 < d(\theta,R^K) < \gamma, \ \mathrm{rank}(\theta) = q} \Delta(\theta).
\end{eqnarray*}

We will show that the latter quantity is bounded below by $\alpha_{0}\gamma^{2}$, where $\alpha_{0}>0$ and $\gamma < \eta$, for $\eta>0$ sufficiently small, or equivalently that
\begin{eqnarray} \label{e3}
\inf_{\theta \in \Theta_{K} \ : \ \gamma^{2}/4 < \|\theta - R^K\|_{F}^{2} < \gamma^{2}, \ \mathrm{rank}(\theta) = q} \|P^K_\delta \circ (\theta - R^K)\|_{F}^{2} > \alpha_1\gamma^{2},
\end{eqnarray}
for some $\alpha_{1} > 0$. This we do again by contradiction. Fix any $\theta$ with $\mathrm{rank}(\theta) = q$ and $\|\theta - R^K\|^{2}_{F} > d$, setting $d=\gamma^2/4$ to simplify things. Assume that $\|P^{K}_\delta \circ (\theta - R^K)\|^{2}_{F} < {\beta}d$, for some $\beta \in (0,1/2)$ . Writing $\theta = R^K + A + B$, where $A = P^{K}_\delta \circ A$ and $P^{K}_\delta \circ B = 0$ (by defining $A = P^{K}_\delta \circ (\theta - R^K)$ and $B = \theta - R^K - A$), we have that whenever $\|P^{K}_\delta \circ (\theta - R^K)\|^{2}_{F} < {\beta}d$ for some $\beta\in(0,1/2)$, it must be that $\|A\|^{2}_{F} < {\beta}d$ and $\|A + B\|^{2}_{F} = \|A\|^{2}_{F} + \|B\|^{2}_{F} > d$. Consequently $\|B\|^{2}_{F} > (1-\beta)d > d/2$ so there must exist an element $(j,k) \notin B_\delta$ such that $|B_{j,k}| > \sqrt{d/(2c_{K})}$, where $c_{K}$ is the total number of elements in $([K]\times [K])\setminus B_\delta$. Observe that $\theta_{j,k} = R^K_{j,k} + B_{j,k}$.

Since all possible minors of $R^K$ of order $q$ are non-zero, the same will be true in an $\eta$-neighbourhood of $R^K$, which includes $\theta$,  for sufficiently small $\eta$. Denote the rows and columns of such an $q\times q$ sub-matrix of $R^K$, say $C_K$, by $\{p_{1},p_{2},\ldots,p_{q}\}$ and $\{r_{1},r_{2},\ldots,r_{q}\}$, respectively. Exploiting the structure of the band, choose this sub-matrix in such a way that the sub-matrix elements and the entries $\{(j,r_l) : 1 \leq l \leq q\}$ and $\{(p_l,k) : 1 \leq l \leq q\}$ lie inside $B_\delta$. Consider the sub-matrix of order $q$ of $\theta$, say $E$, by taking the same rows and columns as in $C_K$. Define the sub-matrix $F$ (resp. D) of order $(q+1)$ obtained by adjoining to $E$ (resp. to $C_K$), the elements ${q}_{1}=(\theta_{j,r_1},\ldots,\theta_{j,r_q})'$, ${q}_{2}=(\theta_{p_1,k},\ldots,\theta_{p_q,k})'$ and $\theta_{j,k}$ (resp. the elements ${c}_{1}=(R^K_{j,r_1},\ldots,R^K_{j,r_q})'$, ${c}_{2}=(R^K_{p_1,k},\ldots,R^K_{p_q,k})'$ and $R^K_{j,k}$). So, 
\begin{eqnarray*}
F = 
\begin{bmatrix}
\theta_{j,k} & {q}_{1}' \\
{q}_{2} & E
\end{bmatrix} \quad \textrm{and } D=
\begin{bmatrix}
R^K_{j,k} & {c}_{1}' \\
{c}_{2} & C_K
\end{bmatrix}.
\end{eqnarray*} 
Then, for $\eta$ sufficiently small, we have that
$$|B_{j,k}| = |{q}_{1}'E^{-1}{q}_{2} - {c}_{1}'C_{K}^{-1}{c}_{2}|<\kappa \|P_\delta^K\circ(\theta-R^K)\|_F<\kappa\sqrt{\beta d},$$
by the fact that the map $({{q}}_1,{{q}}_2,E)\mapsto {{q}}_{1}'E^{-1}{{q}}_{2}$ is locally Lipschitz at any $({{c}}_1,{{c}}_2,C_K)$ as constructed above. For $\beta$ chosen to be sufficiently small, this contradicts the fact that $|B_{j,k}| > \sqrt{d/(2c_{K})}$. We conclude that, for some $\beta\in (0,1/2)$ sufficiently small, it holds that $\|P^{K}_\delta \circ (\theta - R^K)\|^{2}_{F} > {\beta}d$ if $\theta$ is a rank $r$ matrix with $\|\theta - R^K\|^{2}_{F} > d$.  

Next, define
\begin{eqnarray*}
D(\theta)&=&\mathbb{S}_{n,K}(\theta)-S_{K}(\theta)-\mathbb{S}_{n,K}(R^K)+S_{K}(R^K)\\
&=&\mathbb{M}_{n,K}(\theta)-M_{K}(\theta)-\mathbb{M}_{n,K}(R^K)+M_{K}(R^K).
\end{eqnarray*}
Expanding $(\mathbb{M}_{n,K} - M_{K})$ in a first order Taylor expansion with Lagrange remainder, around $R^K$, gives 
\begin{eqnarray*}
D(\theta)&=&\langle\mathbb{M}'_{n,K}(\tilde\theta),\theta-R^{K}\rangle_{\mathrm{F}}-\langle M'_{K}(\tilde\theta),\theta-R^{K}\rangle_{\mathrm{F}}\\
&=&K^{-2} \langle 2P_\delta^{K}\circ(\tilde\theta-\tilde R_{n}^K),\theta-R^K\rangle_{\mathrm{F}}-K^{-2} \langle 2P_\delta^{K}\circ(\tilde\theta-R^K),(\theta-R^K)\rangle_{\mathrm{F}}\\
&=&K^{-2} \langle  2P_\delta^{K}\circ\tilde\theta-2P_\delta^{K}\circ\tilde\theta-2P_\delta^{K}\circ \tilde R_{n}^{K}+2P_\delta^{K}\circ R^{K},\theta-R^{K}\rangle_{\mathrm{F}}\\
&\leq& 2K^{-2} \left \|P_\delta^{K}\circ(\tilde R_{n}^{K}-R^{K})\right \|_{\mathrm{F}}\left \|\theta-R^{K}\right\|_{\mathrm{F}},
\end{eqnarray*}
for a certain $\tilde p \in [0,1]$ and $\tilde\theta =\tilde p{R}^K+(1-\tilde p)\theta$. Note that the fact that the subintervals $\{O_1,\ldots,O_n\}$ are independent and identically distributed and such that $\inf_{|s-t|<\delta}\Prob\{U_1(s,t) =1\}>0$ holds implies that there exists $\epsilon>0$ such that 
$$\sup_{(s,t)\in\B_\delta}\Prob\left\{n^{-1} \sum_{i=1}^nU_i(s,t)\le \epsilon\right\}=O(n^{-2}),$$
and thus by Lemma \ref{lemme_kraus}, we know that $ \E\left \|P^{K}\circ(\tilde R_{n}^{K}-R^{K})\right \|^2_{\mathrm{F}} \le  C K^2 n^{-1}$.

Consequently, by the choice of $\eta$ in relation to $\tau$, it follows that
\begin{eqnarray} \label{e31}
\E\left\{\sup_{\theta \in \Theta_{K} : d(\theta,R^K) < \gamma} |D(\theta)|\right\} &=& \E\left\{\sup_{\theta \in \Theta_{K} : d(\theta,R^K) < \gamma, \mathrm{rank}(\theta) = q} |D(\theta)|\right\} \nonumber \\
&=& \E\left\{\sup_{\theta \in \Theta_{K} : K^{-1}\|\theta - R^K\|_{F} < \gamma} |D(\theta)|\right\} \nonumber \\
&\leq& 2\gamma K^{-1} \E\left \| P^K_\delta \circ (\tilde R_{n}^{K}-R^{K})\right \|_{\mathrm{F}}  \ \leq \ 2\gamma \left(\frac{{C}}{{n}}\right )^{1/2}.
\end{eqnarray}
Combining \eqref{e3} and \eqref{e31}, we conclude that $nK^{-2}\|\hat{R}^K_n - R^K\|_{F}^{2} = O_{\mathbb{P}}(1)$ since $\hat{R}^K_n$ is an approximate minimizer of $\mathbb{S}_{n,K}$  \citep[Theorem 3.4.1]{emp_pro}.
\end{proof}

\section{An additional counterexample} \label{annex_supp}

It was remarked at the end of Section \ref{uni_comp} that one can construct distinct rank three covariances that are $C^{\infty}$ and nevertheless agree on $\mathcal{B}_{1/3}$. We now construct such an example. Once again, consider the bump function 
$$\varphi(u)=1(|u|<1/J)\exp\left\{-\frac{1}{1-\left(Ju\right)^2}\right\},\quad u\in \mathbb{R},$$
which is $C^{\infty}$ everywhere and supported on $(-1/J,1/J)$. Take $J=6$, and define
$$X(t)=\xi_1 \underset{\varphi_1(t)}{\underbrace{\varphi(t-1/6)}}+\xi_2 \underset{\varphi_2(t)}{\underbrace{\varphi(t-1/2)}}+\xi_3 \underset{\varphi_3(t)}{\underbrace{\varphi(t-5/6)}},\quad t\in [0,1],$$
and 
$$Y(t)=\xi_1 \varphi_1(t)+\xi_2 \varphi_2(t)+\{\lambda^{1/2}\xi_1+(1-\lambda)^{1/2}\xi_3\} \varphi_3(t),\quad t\in [0,1], \lambda\in(0,1),$$
where the $\xi_i$ are independent and distributed as $N(0,1)$. Let $\kappa_1$ and $\kappa_2$ be the covariance functions of $X$ and $Y$ respectively, and observe that
$$ \kappa_1(s,t)=\sum_{i=1}^3 \phi_i(s)\phi_i(t),$$
and
\begin{eqnarray*}
 \kappa_2(s,t)&=&\sum_{i=1}^3 \phi_i(s)\phi_i(t) + \lambda^{1/2}\{\phi_1(t)\phi_3(s) + \phi_1(s)\phi_3(t) \}\\
 &=& \kappa_1(s,t) + \lambda^{1/2}\{\phi_1(t)\phi_3(s) + \phi_1(s)\phi_3(t) \}.
 \end{eqnarray*}
Since the functions $\phi_i$ have disjoint supports of length $1/3$, we have that $\lambda^{1/2}\{\phi_1(t)\phi_3(s) + \phi_1(s)\phi_3(t) \}$ is non-zero only if $(s,t)\in (2/3,1)\times (0,1/3)$ or $(s,t)\in (0,1/3)\times(2/3,1)$.
This implies that $\kappa_1$ and $\kappa_2$ are distinct covariance functions that are equal on $\B_{1/3}$.
Moreover, by varying the value of $\lambda\in(0,1)$ we can produce infinitely many distinct $C^{\infty}$ covariances of rank 3, that nevertheless agree on the band of width $1/3$.

\section{Additional Numerical Results} \label{add_res}

Using the simulation setup described in Section \ref{sec:simulations}, we study the efficacy of scree-plot inspection as a means for selecting the rank of $\hat R^K_n$. To this aim, we explore the behaviour of the mapping $i\mapsto f(i)$, for $i=1,\ldots,8$, defined in Section \ref{subsec:scree-plot}, in particular whether the function $f(\cdot)$ indeed levels out at the true rank. In order for functions $f(\cdot)$ corresponding to different $\delta$ values to appear on a same graph, we normalise each function $f(\cdot)$ by $\|P^K_{\delta'} \circ R^K_n \|$, thus rescaling them to a common scale. The results are presented in Fig.~\ref{find_r}. Each plot represents the results for a given scenario and a given rank, and the dotted vertical lines indicate the true value $q$ of the rank. We would select the true rank except in the more challenging cases $\delta\in\{0.5,0.6\}$, where we might select a rank of $2$, though the true rank is  $1$ or $3$. 

\begin{figure}
\centering
\includegraphics[scale=0.7]{./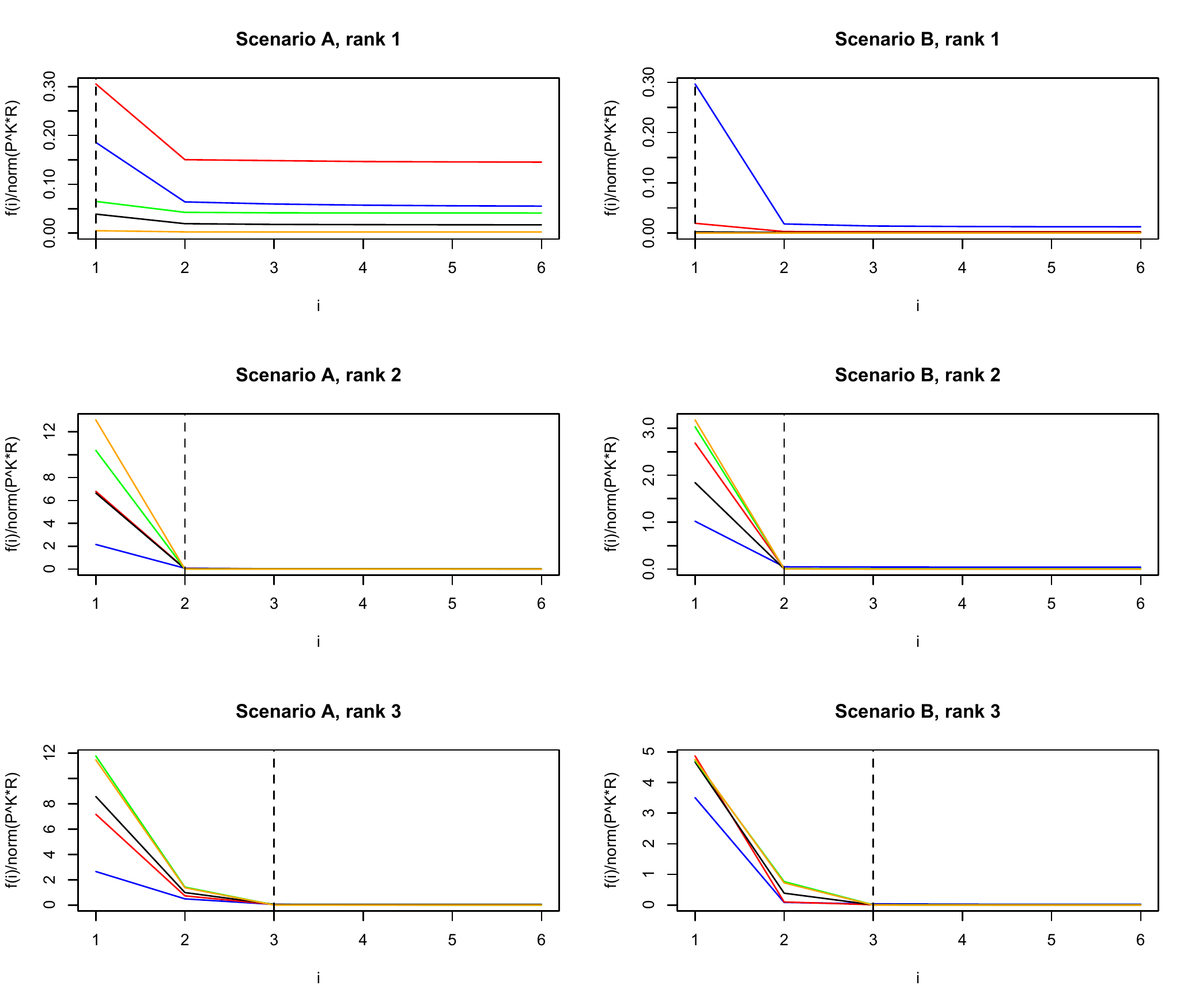}
\caption{Plots of the function $f(\cdot)$, defined in Section \ref{subsec:scree-plot}, normalised by $\|P^K\circ R^K_n \|_F^2$ for a given scenario and a given rank. The curves in blue, red, black, green and orange correspond to settings with $\delta=0.5, 0.6, 0.7, 0.8$ and $0.9$.}
\label{find_r}
\end{figure}

As mentioned in Section \ref{sec:simulations}, we also studied the performance of our method for different values of the grid size $K$. We considered the exact same setup as described in Section \ref{sec:simulations} for Scenario A, but with a grid size equal to $K=25$ or $K=100$. The median, and the first and third quartiles of the $100$ relative error percentages obtained for each combination of parameters are presented in Table \ref{add_resu}. One observes that the results are very similar to those for $K=50$, suggesting a certain amount of stability of the relative error with respect to the grid size.

\begin{table}[ht]
\centering
\begin{tabular}{|c|c|c|c|c|}
\hline
Grid size & $\delta$ $(\delta')$& rank $1$ & rank $2$ & rank $3$ \\
  \hline
  \multirow{5}{*}{$K=25$} 
  &$0.5$ $(0.4)$ & $13$ $(11, 17)$ & $25$ $(18, 31)$ & $35$ $(31, 38)$ \\ 
  &$0.6$ $(0.5)$& $14$ $(10, 17)$ & $16$ $(13, 19)$ & $16$ $(14, 21)$ \\ 
  &$0.7$ $(0.6)$& $12$ $(9, 15)$ & $15$ $(12, 18)$ & $16$ $(13, 20)$ \\ 
  &$0.8$ $(0.7)$& $10$ $(8, 14)$ &$13$ $(10, 15)$ & $14$ $(12, 17)$ \\ 
  &$0.9$ $(0.8)$& $8$ $(6,13)$ & $11$ $(9,16)$ & $12$ $(9,15)$ \\ 
    \hline
   \multirow{5}{*}{$K=100$} 
     &$0.5$ $(0.4)$& $14$ $(12, 19)$ & $28$ $(22, 34)$ & $35$ $(32, 39)$ \\ 
  &$0.6$ $(0.5)$& $15$ $(11, 19)$ & $17$ $(14, 20)$ & $20$ $(16, 23)$ \\ 
  &$0.7$ $(0.6)$& $13$ $(11, 17)$ & $15$ $(13, 18)$ & $17$ $(15, 21)$ \\ 
  &$0.8$ $(0.7)$& $12$ $(10, 17)$ &$15$ $(12, 18)$ & $15$ $(12, 18)$ \\ 
  &$0.9$ $(0.8)$& $10$ $(8,13)$ & $11$ $(9,14)$ & $12$ $(10,15)$ \\  
      \hline
  \end{tabular}
  \caption{Median of the relative error percentage of our estimators for Scenario A for different values of $K$, of the rank and of $\delta$. The first and third quartiles are in parentheses.}
  \label{add_resu}
\end{table}


\bibliographystyle{imsart-nameyear}
\bibliography{biblio}

\end{document}